\documentclass[11pt,reqno]{amsart}

\usepackage{amsmath, amsthm, amscd, amsfonts, amssymb, graphicx, color, mathrsfs, extarrows}
\usepackage{mathscinet}
\usepackage[a4paper,left=3cm,right=3cm,top=2cm,bottom=4cm,bindingoffset=5mm]{geometry}
\usepackage{fancyhdr}
\usepackage{comment}
\usepackage{float}
\usepackage{diagbox}
\usepackage{eurosym}
\usepackage[utf8]{inputenc}
\usepackage{amsmath}
\usepackage{amsfonts}
\usepackage{version}
\usepackage{mathtools}
\usepackage{amscd}
\usepackage{amssymb}
\usepackage{dsfont}
\usepackage{amsthm}
\usepackage{thmtools}
\usepackage{graphicx}
\usepackage{mathrsfs}
\usepackage{todonotes}
\usepackage{xcolor}
\usepackage{todonotes}
\usepackage{setspace}
\usepackage{enumerate}
\usepackage[english]{babel}
\usepackage{xcolor}
\usepackage[colorlinks=true,linkcolor=blue]{hyperref}
\usepackage[all]{hypcap}

\newtheorem{theorem}{Theorem}

\theoremstyle{definition}



\newcommand{\R}{\mathbb{R}}
\newcommand{\N}{\mathbb{N}}
\renewcommand{\P}{{\mathbb P}}

\newcommand{\E}{\mathbb{E}}

\newcommand{\PD}{{\rm PD}}

\newcommand{\eins}{\mathds{1}}

\newcommand{\old}{{\rm old}}
\newcommand{\class}{{\rm grade}}
\newcommand{\Cov}{{\rm Cov}}

\newcommand{\DR}{{\rm DR}}
\newcommand{\LTDR}{{\rm LRDR}}
\newcommand{\LTCT}{{\rm LRCT}}

\newcommand{\RD}{{\rm RD}}

\title[Testing for long-term calibration in rating systems]{A hypothesis test for the long-term calibration in rating systems with overlapping time windows}

\author{Patrick Kurth}
\address{Landesbank Baden-W\"urttemberg, Stuttgart, Germany}
\email{patrick.kurth@lbbw.de}

\author{Max Nendel}
\address{Center for Mathematical Economics, Bielefeld University, Germany}
\email{max.nendel@uni-bielefeld.de}

\author{Jan Streicher}
\address{Center for Mathematical Economics, Bielefeld University, Germany and Landesbank Baden-W\"urttemberg, Stuttgart, Germany}
\email{jan.streicher@uni-bielefeld.de}

\thanks{The authors thank Markus Klein for helpful discussions related to this work.\ This work was funded by the Deutsche Forschungsgemeinschaft (DFG, German Research Foundation) -- SFB 1283/2 2021 -- 317210226.\ The first and the third author are grateful for the support of the Landesbank Baden-W\"urttemberg related to this work.}

\date{\today}


\begin{document}
\maketitle

\begin{abstract}
We present a statistical test that can be used to verify supervisory requirements concerning overlapping time windows for the long-term calibration in rating systems.\ In a first step, we show that the long-run default rate is approximately normally distributed with respect to random effects in default realization. We then perform a detailed analysis of the correlation effects caused by the overlapping time windows and solve the problem of an unknown distribution of default probabilities for the long-run default rate. In this context, we present several methods for a conservative calibration test that can deal with the unknown variance in the test statistic. We present a test for individual rating grades, and then pass to the portfolio level by suitably adapting the test statistic.\ We conclude with comparative statics analysing the effect of persisting customers and the number of customers per reference date.
\end{abstract}


\section{Introduction}

Financial institutions use statistical models to estimate the default risk of obligors in order to manage credit-risks. According to Basel II, banks are allowed to estimate risk parameters that are used to calculate regulatory capital with their own models. The legal framework for the use of such models in the internal ratings-based (IRB) approach is regulated in the Capital Requirements Regulation (CRR), see \cite{CRR}.\ The CRR imposes specific requirements for the models, e.g., that \qq{institutions shall estimate PDs by obligor grade from long run averages of one-year default rates}, see Article 180.\ In 2017, the European banking authority published the \qq{guidelines on PD estimation, LGD estimation and the treatment of defaulted exposures} (EBA-GL) specifying the CRR requirement, see \cite{EBA_GL}.\ Paragraph 81 EBA-GL states that \qq{institutions should calculate the observed average default rates as the arithmetic average of all one year default rates}. This requirement was additionally specified in \cite{ECB Guide} and in \cite{RTS (EBA)}. Here, the observed one-year default rate at a given reference date is defined as the percentage of defaulters in the following year, so that the observed long-run average default rate depends on the choice of the reference dates.\ Paragraph 80 EBA-GL allows institutions to choose \qq{between an approach based on overlapping and an approach based on non-overlapping one-year time windows}.\ Overlapping one-year time windows occur when the time interval between two reference dates is less than one year.\ Due to computational simplicity, it is of course convenient to continue working with non-overlapping time windows and appropriately adjust the long-term default rate.\ In many cases, these type of adjustments are rather on the conservative side, see, e.g., \cite{jing2008asset} for an empirical study and \cite{li2016probability,zhou2001analysis} for theoretical analyses in the context of asset correlation.\

On the other hand, the approach, using overlapping time windows, provides more information on defaults due to potential short-term contracts, which cannot be observed during one-year periods.\ It is therefore favored by most financial institutions that handle portfolios with only few observed defaults.\  
Another advantage of this approach lies in the fact that the bias caused by a specific choice of reference dates can be reduced, e.g., when calculating the long-run average default rate as the arithmetic mean of the one-year default rates on a quarterly basis.

Paragraph 87 EBA-GL requires institutions to use a statistical test of calibration at the level of rating grades and for certain portfolios.\ Classically, the literature dealing with calibration tests assumes a binomial-distributed default rate, see, e.g., \cite{bundesbank2003approaches} and \cite{tasche2008validation} for a discussion of different hypothesis tests in this context. We also refer to \cite{coppens2016advances} for the consideration of different PDs within the same rating grade and \cite{blochwitz2004validating,blochwitz2006statistical} for a modified binomial test  accounting correlated defaults. However, when considering overlapping time windows, the assumption of a binomial distribution can no longer be maintained when considering overlapping time windows.

More generally, Monte Carlo methods can be used to construct tests for distributions that cannot be determined analytically. For the use of Monte Carlo methods, precise knowledge of the distribution of the probabilities of default within the portfolio is essential.\ However, in our case, these probabilities are unknown and estimated by the underlying model, so that they cannot not be used to determine the test statistic. On the other hand, analytical tests are desirable since they satisfy requirements such as replicability and reproducibility. We refer to \cite{blochlinger2012validation,blochlinger2016probabilities} for hypothesis tests in analytic form that take into account correlation effects, replacing the i.i.d.\ assumption by a conditional i.i.d.\ assumption. In this context, we also refer to \cite{tasche2003traffic} for a one-observation-based inference on the adequacy of probability of default forecasts with dependent default events.

In this paper, we therefore present a statistical test that can be used to verify supervisory calibration requirements in the case of overlapping time windows. A major challenge lies in the analysis of the correlation effects that are caused by the overlapping time windows. On the level of individual ratings grades, the variance of the test statistic is already determined by the null hypothesis. This, however, is not the case on a portfolio level, which is why we focus in great detail on a conservative estimate for the variance by solving a related minimization problem. We then present a conservative calibration test that can deal with the unknown variance in the test statistic.

The rest of the paper is structured as follows.\ In Section \ref{sc: notations}, we introduce the terminology and notation to for the formulation of the hypothesis test in Section \ref{Hypothesis Test}. In Section \ref{Normal Distribution of long run default rate}, we show that the long-run default rate is approximately normally distributed with respect to random effects in default realization. Thereafter, we focus on the analysis of the correlation effects that arise due to the overlapping time periods in Section \ref{Covariance between Default States} and Section \ref{Covariance between Default Rates}. Taking these correlation effects into account is essential for determining the variance of the long-run default rate, see Section \ref{Distribution of the long-run default rate}. \ In Section \ref{hypothesis test for long-term calibration} we formulate the hypothesis test in detail, first at the level of individual rating grades, cf.\ Section \ref{hypothesis test rating grade} and then at portfolio level, cf.\ Section \ref{Statistical Test on Portfolio Level}. We conclude with a discussion on the parameters of the test and further considerations in Section \ref{sec:4}.\ A closed form solution to the minimization problem related to the estimation of the variance is derived in the Appendix \ref{ Minimization problem}. In the Appendix \ref{Appendix E}, we propose an alternative way to estimate the variance without solving and optimization problem.

\section{Setup and preliminaries} \label{sc: setup}

\subsection{Setup and notation} \label{sc: notations}
In this section, we introduce the terminology and notation used in this paper and give a formal description of the statistical test that is studied in this work.\
We begin with the general setup by considering the default state of an individual obligor. Throughout, a default state over a one-year time horizon is described by a Bernoulli-distributed random variable $x\sim B\left(1,p\right)$, where
$$p\in\big\{\PD_1,\ldots,\PD_m \big\},$$ 
$m\in \N$, and $0<\PD_1<\ldots<\PD_m<1$ are the default probabilities of the rating grades in the underlying master scale. In the following, we also use the notation $\PD_{\min}$ and $\PD_{\max}$ for the default probabilities $\PD_1$ and $\PD_m$, respectively.

Next, we introduce the long-run default rate for a given history of reference dates $\RD_1<\ldots<\RD_N$ with $N\in \N$. For all $t=1,\ldots,N$, we are given a number $n_t\in \N_0$ of customers within the portfolio at the reference date $\RD_t$, and write 
\[
n_{\min}:=\min \big(\{n_t\, |\, t=1,\ldots,N \}\setminus\{0\}\big)\quad\text{and}\quad n_{\max}:=\max \{n_t\, |\, t=1,\ldots,N \}
\]
for the minimal and maximal number of existing customers at any reference date throughout the history of the portfolio, respectively.\ Throughout, we assume that $n_{\max}\neq 0$ or, equivalently, that $n_t\neq 0$ for some $t=1,\ldots, N$.

Let $q\in \N$ be the number of reference dates within a one-year time-horizon starting from an arbitrary reference date.\ To be more precise, we assume that
$$
\RD_{t+1}=\RD_t+\frac{1}q\quad\text{for all }t= 1,\ldots, N-1.
$$
According to the EBA guidelines \cite{EBA_GL}, the long-run default rate should be computed at least on a quarterly basis, i.e., $q=4$, or on an annual basis, i.e., $q=1$, performing however an analysis of the possible bias that occurs due to the negligence of quarterly data.\ The main interest of our analysis lies in the case $q>1$ leading to correlation effects caused by overlapping time windows. For $s,t=1,\ldots,N$ with $s>t$, we define $w_{t,s}:=\max\{0,\RD_{t}+1-\RD_{s}\}$ for the size of the overlap of the observation periods.

Since new customers are usually added over time and business relationship ends for others, assign to each customer during the history of reference dates a number from $1$ to $M\in \N$, and we define $\Lambda_t\subseteq\{1,..,M\}$ as the set of obligors at reference date $\RD_t$ for all $t=1,\ldots, N$. In particular, $\bigcup_{t=1}^N \Lambda_t=\{1,\ldots, M\}$, $\left|\Lambda_t\right|=n_t$, and $n_{\max}\leq M$.\ For $t=1,\ldots, N$, we then define the one-year default rate among the considered customers at reference date $\RD_t$ by
\begin{align}\label{defaulte rate RD}
 X_t=\frac{1}{n_t}\sum\limits_{j\in\Lambda_t}x_{j,t},   
\end{align}
where $x_{t,j}\sim B(1,p_{t,j})$ is the one-year default state and $p_{t,j}\in \left\{\PD_1,\ldots,\PD_m \right\} $ the probability of default over a one year time horizon of customer $j\in \Lambda_t$ at the reference date $\RD_t$. Since $n_t=0$ cannot be excluded, we use the convention $\frac{0}{0}:=0$.

The realized default rate on a reference date $\RD_t$ is denoted by $\DR_t$, i.e., $\DR_t$ is the realization of the random variable $X_t$ for $t=1,\ldots, N$. Moreover, we define 
$$R_N:=\big\{t=1,\ldots, N\,|\, n_t> 0\big\}$$ 
as the set of indices for reference dates, where the portfolio contains at least one customer, and denote its cardinality by $R(N)$.\ Since we assume that $n_{\max}\neq 0$, it follows that $R(N)>0$, and we define the realized long-run default rate as
$$\LTDR:=\frac{1}{R(N)} \sum_{t=1}^{N}\DR_t,$$
which is a realization of the the long-run default rate
\begin{align} \label{long-run DR}
  Z:=\frac{1}{R(N)}\sum_{t=1}^{N}X_t.
\end{align}
  We emphasize that the long-run default rate is the arithmetic mean of the one-year default rates and \textit{not} the arithmetic mean of the individual default states of all customers for all reference dates, which intuition might suggest. Due to the chosen convention $\frac{0}{0}=0$, it follows that $X_t=0$ if $n_t=0$, i.e., reference dates, where the portfolio contains no customers are completely disregarded in the computation of the long-run default rate. Nevertheless, these reference dates have to be included into the timeline since they describe an overlapping period. 

  \subsection{Formal description of the test}\label{Hypothesis Test}
We now give an overview over the hypothesis test presented in this paper.
The aim is to formulate a statistical test for comparing the realized long-run default rate $\LTDR$ with the estimated long-run default rate $\LTCT$, which is called the long-run central tendency, both on the level of individual rating grades and on portfolio level.\ We use the random variable $Z$ with expected value $\mu$ as the test statistic. We formulate the following hypothesis test with
\begin{align*}
 & \text{null hypothesis }H_0:\mu=\LTCT \quad \text{and}\\
 & \text{alternative hypothesis }H_1:\mu\neq \LTCT.
\end{align*}
Based on this, we determine values $k,K\in [0,1]$ and consider the test function
\begin{equation}\label{eq:phi}
\varphi(z):=\begin{cases}
 0,& \text{if}\ z\in [k,K],\\
 1,& \text{otherwise}.
\end{cases}
\end{equation}
 If $\varphi \left( \LTDR \right)=0$, the null hypothesis based on the available data is retained, whereas it is rejected in favor of the alternative hypothesis if $\varphi\left(\LTDR\right)=1$.\ We point out that the distribution of the test statistic $Z$ does not only depend on $\mu$ but also on other parameters that are not determined by the null hypothesis. The main focus of the paper is to postulate distributional assumption for $Z$ such that its distribution is determined by the null hypothesis. In this context, we first show that $Z$ is approximately normally distributed, i.e., $Z\sim \mathcal{N} \left(\mu,\sigma^2\right)$.\ In a second step, the focus lies on a conservative estimate for the variance $\sigma^2$ based on the information given by the null hypothesis. A major challenge lies in the fact that the consideration of overlapping time windows leads to unavoidable correlation effects that rule out independence assumptions on the random variables $X_1,\ldots, X_N$.

\subsection{Distribution of the long-run default rate} \label{Normal Distribution of long run default rate}

As we have seen in the previous section, the long-run calibration test requires the formulation of a distribution assumption for the long-run default rate. However, the definition of the long-run default rate as the arithmetic mean of the one-year default rates leads to difficulties in the derivation of an analytical description of the distribution based on the Bernoulli-distributed default states of the individual obligors.\ The aim of this section is to show that, despite the correlation effects and the possibly varying number of obligors with different PDs at each reference date, the long-run default rate is still approximately normally distributed.

In order to simplify notation, we define $y_{t,j}:=x_{t,j}$ if $j\in\Lambda_t$ and $y_{t,j}:=0$ otherwise, for all $t=1,\ldots, N$. Then,
$$
Z=\frac{1}{R(N)}\sum\limits_{t=1}^{N}\frac{1}{n_t}\sum\limits_{j=1}^My_{t,j}=\sum\limits_{j=1}^M Y_j
$$
with $Y_j=\sum\limits_{t=1}^{N}\frac{1}{R(N)}\frac{1}{n_t} y_{t,j}.$

In the sequel, we will show that $Z$ is approximately normally distributed. For this, we assume that default states of different obligors are independent of each other, from which the independence of the family $(Y_j)_{j=1,\dots,M}$ follows. This assumption is standard and sufficiently conservative for the calibration of rating models.

In order to apply the Lindeberg-Feller central limit theorem (CLT), which is a generalization of the classical CLT, we first show that the variances of the random variables $(Y_j)_{j=1,\ldots, M}$ cannot become arbitrarily small. In fact, by neglecting the covariance between the default states of an obligor at different reference dates,
\begin{align*}
\sigma^2\left(Y_j\right)&=\frac{1}{R(N)^2}\sum\limits_{t=1}^{N}\bigg(\frac{1}{n_t^2}\sigma^2(y_{t,j})+2\frac{1}{R(N)^2}\sum\limits_{s=t+1}^N\frac{1}{n_{t}n_{s}}\Cov(y_{t,j},y_{s,j})\bigg)\\
&\geq\frac{1}{R(N)^2}\sum\limits_{t=1}^{N}\frac{1}{n_t^2}\sigma^2(y_{t,j})\quad\text{for all } j=1,\ldots,M.
\end{align*}
Moreover, for each obligor $j=1,\ldots,M$, there exists at least one index $t=1,\ldots,N$ with $\sigma^2(y_{t,j})>0$ since  $\bigcup_{t=1}^N \Lambda_t=\{1,\ldots, M\}$. Hence, for each obligor $j=1,\ldots, M$,
$$
\sigma^2\left(Y_j\right)\geq\frac{1}{R(N)}\frac{1}{n_{\max}^2}\min\limits_{k=1,\ldots,m}\left\{(\PD_k(1-\PD_k))\right\}=:v_{\min}.
$$
 Note that $v_{\min}$ is independent of the customer $j=1,\ldots, M$ and $v_{\min}>0$.\ Since, up to now, we only consider finitely many customers, we extend the family $(Y_j)_{j=1,\dots,M}$ by a sequence of independent random variables $Y_{M+1},Y_{M+2},\dots$ with $\sigma^2\left(Y_j\right)\geq v_{\min}$, e.g. $Y_j\sim \mathcal{N}\left(\mu_j,v_{\min}\right)$ with arbitrary drift $\mu_j\in(0,1)$ for $j\geq M+1$.
 
 Now, the central limit theorem shall be applied to the family $(Y_j)_{j\in\mathbb{N}}$.\ In the classical version by Lindeberg \& Lévy, the CLT states that a proper renormalization of the arithmetic mean is approximately normally distributed if the underlying sequence of random variables is independent and identically distributed. Its generalized version by Lindeberg \& Feller also applies to random variables that are not identically distributed as in our case.\ It states that a proper renormalization of the arithmetic mean of a family of independent random variables converges in probability against a normally distributed random variable if this family satisfies the so-called Lindeberg condition, see \cite[Definition 15.41]{klenke2020}.\ In our case, the Lindeberg condition applies if, for all $\varepsilon>0$,
$$
\lim_{k\rightarrow\infty}{\frac{1}{s_k^2}\sum_{j=1}^{k}\mathbb{E}\Big(\big(Y_j-\mathbb{E}(Y_j)\big)^2 \cdot\eins_{\big\{\left|Y_j-\mathbb{E}(Y_j)\right|>\varepsilon s_k\big\}}\Big)=0},
$$
where
$$
s_k=\sqrt{\sum_{j=1}^{k}{\sigma^2\left(Y_j\right)}}.
$$
That is, the Lindeberg condition requires that the underlying sequence of random variables does not exhibit arbitrarily large deviations from the expected value.

In the following, we verify the Lindeberg condition in the our setup.\ To that end, let $\varepsilon>0$. Since, for all $j=1,\ldots, M$,  the random variable $Y_j$ only takes values in $(0,1)$,
$$
\sigma^2\left(Y_j\right)\geq v_{\min}>0, \quad\text{and}\quad s_k\rightarrow\infty\quad\text{as }k\to \infty,
$$ 
 there exists some $k_0\in \mathbb{N}$ such that, for all $k\in \N$ with $k\geq k_0$, $j\in \mathbb N$, and $\omega\in \Omega$,
$$
|Y_j(\omega)-\mathbb{E}(Y_j)|\leq \varepsilon\cdot s_k.
$$
Hence, for all $k\in \N$ with $k\geq k_0$,
    \begin{align*}
   \frac{1}{s_k^2}\sum _{j=1}^{k}\mathbb{E}\Big(&\big(Y_j-\mathbb{E}(Y_j)\big)^2\cdot\eins_{\left\{ \left| Y_j-\mathbb{E}(Y_j) \right|>\varepsilon s_k\right\}}\Big)\\
   &=\frac{1}{s_k^2}\sum _{j=1}^{k_0}\mathbb{E}\left( \big( Y_j-\mathbb{E}(Y_j) \big)^2\cdot\eins_{\left\{ \left|Y_j-\mathbb{E}(Y_j) \right|>\varepsilon s_k\right\}}\right) \\ &\leq \frac{1}{s_k ^2}\sum_{j=1}^{k_0}\mathbb{E}\left( \big(Y_j-\mathbb{E}(Y_j)\big)^2\right) \rightarrow0\quad\text{as }k\to \infty,
    \end{align*} 
 i.e., the Lindeberg condition is satisfied. By the central limit theorem, cf.\ \cite[Theorem 15.44]{klenke2020}, we obtain that
$$ \lim_{k\rightarrow\infty}{\P\Bigg( \frac{1}{s_k}\sum _{j=1}^{k}\big( Y_j-\mathbb{E}(Y_j)\big) \leq z\Bigg)}=\Phi\left( z\right)\quad\text{for all }z\in \R,$$
where $\Phi$ is the cumulative distribution function for the standard normal distribution.\ Since the number of customers $M$ is large but fixed throughout our analysis, we may therefore assume that
\begin{equation} \label{eq:sim}
Z\sim \mathcal{N} \left(\mu,\sigma^2\right)
\end{equation}
with $\mu=\sum_{j=1}^M \E(Y_j)$ and $\sigma=s_M$, suppressing the dependence of $\mu$ and $\sigma$ on the number of customers $M$.

The literature suggests that, in standard cases, a sample size of at least $30$ is required in order to achieve useful results with the approximation of a normal distribution, see, for example, \cite{hoggtanis}.\ In our setting, however, the situation is more complex due to the specific calculation method of the long-run default rate.\ As we will see in Section \ref{sc: rate of convergence}, the convergence depends both on the number of reference dates $N$ and the number of customers $n_t$ per reference date $\RD_t$, hence indirectly on $M$, bot not solely on $M$. As a consequence, a large number of customers does not necessarily imply a good approximation, e.g., if we have $N=2$ reference dates and $M=1000000$ customers, but only one customer at the second reference date, i.e., $n_2=1$. For more details, we refer to Section \ref{sc: rate of convergence}, where we focus on the convergence for different combinations of $N$ and numbers of customers $n_t$ per reference date $\RD_t$.

\subsection{Covariance between Default States}  \label{Covariance between Default States}

Pursuant to Paragraph 80 EBA-GL \cite{EBA_GL}, when calculating the long-run default rate, institutions may choose between overlapping time windows of one year and non-overlapping time windows.\ According to Paragraph 78 EBA-GL, the reference dates used should include at least all quarterly reference dates. The overlap of the time windows of the one-year default rates leads to correlations between them, which affects the variance of the long-run default rate and thus the acceptance ranges of the test.

In order to be able to describe the distribution of the test statistic as precisely as possible, in Section \ref{Covariance between Default Rates} below, we thus focus on the analysis of the covariances between the default rates, and start by considering the covariance between two default states of individual obligors for overlapping observation periods in this subsection.\ To that end, we consider two reference dates $\RD_{t}$ and $\RD_{s}$ with $\RD_{t}<\RD_{s}<\RD_{t}+1$ and an obligor who can be observed in the period $[\RD_{t},\RD_{s}+1)$. The default state in the period $[\RD_{t},\RD_{t}+1)$ is described by a random variable $x_t\in\left\{0,1\right\}$, the default state in the period $[\RD_{s},\RD_{s}+1)$ by a random variable $x_s\in\left \{0,1\right\}$. That is, a realization of $x_t$ or $x_s$ answers the question whether or not the obligor under consideration defaulted in time period $[\RD_{t},\RD_{t}+1)$ or $[\RD_{s},\RD_{s}+1)$, respectively.

 In order to compute the covariance between the two default states $x_t$ and $x_s$, a detailed knowledge of the time of default is necessary, where, in the case of more than one default event, we focus on the time of the temporally first default. The time of first default is described by a random variable $T_1\in[\RD_{t},\infty)$, which creates a link between default status and time of default, see, e.g., \cite{European Commission}.
 While $T_1$ describes the timing of the first default starting at time $\RD_{t}$, we model the timing of the first default after the default described by $T_1$ and after $\RD_{s}$ using a random variable $T_2^*\in[\RD_{s},\infty)$, with $T_2^*>T_1$. We then define
\[
T_2:=\begin{cases}
T_1,& \text{if }T_1\geq \RD_{s} \text{ and } T_2^* > \RD_{s}+1,\\
T_2^*, & \text{otherwise.}
\end{cases}
\]
If besides $T_1$ more than one default is observed in the interval $[\RD_{s},\RD_{s}+1)$, $T_2$ describes the time of the second default, otherwise $T_2$ is used to model the time of the first default starting from $\RD_s$. We now divide the considered time period into three disjoint intervals 
$$I_1:=[\RD_{t},\RD_{s}], \quad I_2:=(\RD_{s},\RD_{t}+1],\quad \text{and}\quad I_3:=(\RD_{t}+1,\RD_{s}+1],$$
\begin{figure}[htb]
\includegraphics[width=\textwidth]{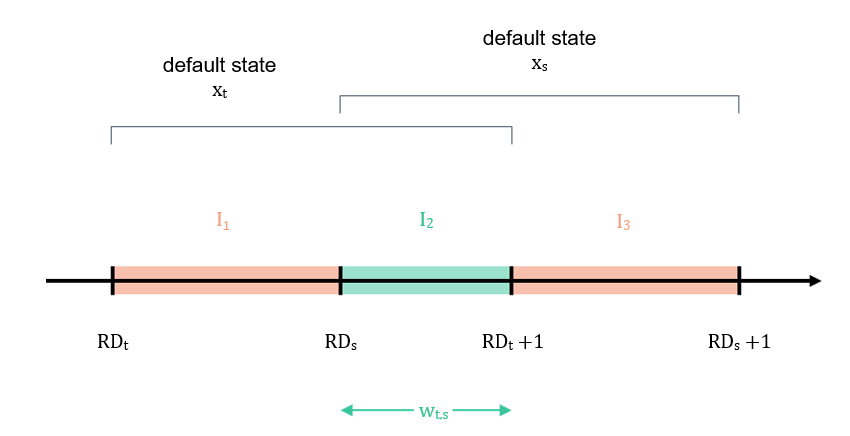}
\caption{Default states per time period}
\label{Overlapping Time Periods}
\end{figure}
and look at the following disjoint events 
\[
   E_1:=\{T_1\in I_1\},\quad   E_2:=\{T_1\in I_2\},\quad\text{and}\quad  E_3:=\{T_1\notin I_1\cup I_2\}.
\]
Moreover, we define
\[
   E_4:=\{T_2\in I_2\},\quad
   E_5:=\{T_2\in I_3\},\quad\text{and}\quad
   E_6:=\{T_2\notin I_2\cup I_3\}.
\]
In order to simplify notation, we set $p_t:=\P(x_t=1)$ and $p_s:=\P(x_s=1)$ and assume that
\[
   \P(x_t=1\:|\:E_5)=p_t\quad\text{and}\quad
   \P(x_s=1\:|\:E_1)=p_s.
\]
In a first step, we focus on the probability $\P(E_1)$. For this, we assume that the probability of default in a certain time interval depends on the length of this interval and on the general creditworthiness of the obligor, i.e., we assume that 
$$\P(T_1\in I\:|\:x_t=1)=f_1(|I|)$$
for every interval $I\subseteq [\RD_{t},\RD_{t}+1)$ and a non-negative and non-decreasing function $f_1$ with $f_1(0)=0$ and $f_1(1)=1$.\ This ensures that the probability of observing a default in a given time interval increases if the length of the interval increases and that the probabilities of default are identical for intervals of the same length. We deduce $$\P(E_1\:|\:x_t=1)=1-\P(E_2\:|\: x_t=1)=1-f_1(w_{t,s}),$$ and therefore, using Bayes' theorem,
\[
1-f_1(w_{t,s})=\P(E_1\:|\:x_t=1)=\P(x_t=1\:|\:E_1)\cdot\frac{\P(E_1)}{\P(x_t=1)}=\frac{\P(E_1)}{p_t}.
\]
Rearranging terms, we find that
$$\P(E_1)=\big(1-f_1(w_{t,s})\big)p_t$$
and, analogously,
\[
   \P(E_2)=f_1(w_{t,s})p_t.
\]
Hence,
\begin{align*}
\E(x_t\cdot x_s)&=\P(x_t=1, x_s =1)=\P(x_t=1,x_s=1\:|\:E_1)\cdot \P(E_1)\\
&\quad+\P(x_t=1,x_s=1\:|\:E_2)\cdot \P(E_2)+\P(x_t=1,x_s=1\:|\:E_3)\cdot \P(E_3)\\
&=\P(x_s=1\:|\:E_1)\cdot \P(E_1)+\P(E_2)= p_s\big(1-f_1(w_{t,s})\big)p_t+f_1(w_{t,s})p_t\\
&=p_tp_s+f_1(w_{t,s})p_t(1-p_s).
\end{align*}
Analogously, one obtains
\[
   \P(E_4)=f_2(w_{t,s})p_s\quad\text{and}\quad
   \P(E_5)=\big(1-f_2(w_{t,s})\big)p_s,
\]
for a non-negative and non-decreasing function $f_2$ with $f_2(0)=0$ and $f_2(1)=1$, and we find that
\begin{align*}
\E(x_t\cdot x_s)&=\P(x_t=1, x_s =1)=\P(x_t=1,x_s=1\:|\:E_4)\cdot \P(E_4)\\
&\quad+\P(x_t=1,x_s=1\:|\:E_5)\cdot \P(E_5)+\P(x_t=1,x_s=1\:|\:E_6)\cdot \P(E_6)\\
&=\P(E_4)+\P(x_t=1\:|\:E_5)\cdot \P(E_5)=f_2(w_{t,s})p_s+p_t\big(1-f_2(w_{t,s})\big)p_s\\
&=p_tp_s+f_2(w_{t,s})p_s(1-p_t).
\end{align*}
Equating the previous expectations results in
\[
f_1(w_{t,s})=f_2(w_{t,s})\frac{p_s\left(1-p_t\right)}{p_t\left(1-p_s\right)},
\]
i.e., $f_1$ is a linear transform of $f_2$.\ To determine the covariance between $x_t$ and $x_s$, knowledge of at least one of the functions $f_1$ or $f_2$ is required.\ We assume that the distribution of the time of default within an observation year is uniform.\ We postulate this assumption for $x_s$, since it gives preference to the more recent information at $\RD_{s}$ over the information at $\RD_{t}$, which lies further in the past.\ We therefore set $f_2(w_{t,s})=w_{t,s}$ and end up with 
\begin{equation} \label{eq:3}
\Cov(x_t,x_s)=\E(x_t\cdot x_s)-\E(x_t)\cdot\E(x_s)=w_{t,s}p_s(1-p_t).
\end{equation}

\subsection{Covariance between Default Rates} \label{Covariance between Default Rates}

Having previously examined the covariance between obligor default states, we extrapolate this result to the covariance of default rates. To do this, we consider a sample of debtors on reference dates $\RD_{t}$ and $\RD_{s}$ with $\RD_{s}<\RD_{t}+1$ and $t,s=1,\ldots,N$. During the transition from the first to the second reference date, debtors can be removed from monitoring due to defaults, terminated business relationships, or migrations to other rating systems, and new debtors can be added on the second reporting date due to new business or migrations to the rating system under consideration.\ In addition, there are debtors who can be observed on both reference dates.\ The number of these so-called persisting customers with respect to reference dates $\RD_{t}$ and $\RD_{s}$ is denoted by $\left|\Lambda_{t} \cap \Lambda_{s}\right|=:k_{t,s}\in \mathbb {N}$. The one-year default rates on the two reference dates are then given by $X_{t}=\frac{1}{n_{t}}\sum_{j\in \Lambda_{t}}^{}x_{t,j}$ with $x_{t,j}\sim B\left(1,p_{t,j}\right)$ and $X_{s}=\frac{1}{n_{s}}\sum_{j\in \Lambda_{s}}^{}x_{s,j}$ with $x_{s,j}\sim B\left(1,p_{s,j}\right)$. Hence,
$$\Cov\left(X_{t},X_{s}\right)=\frac{1}{n_{t} n_{s}}\sum_{i\in \Lambda_{t}}^{}\sum_{j\in \Lambda_{t}}^{}{\Cov\left(x_{t,i},x_{s,j} \right)}.$$
Again, assuming that default states for pairs of different customers are independent, using Formula \eqref{eq:3}, we find that
\begin{equation} \label{eq:Cov(X,Y)}
\Cov\left(X_{t},X_{s}\right)=\frac{1}{n_{t} n_{s}}\sum_{j\in \Lambda_{t} \cap \Lambda_{s} }^{}\Cov\left(x_{t,j},x_{s,j} \right)= \frac{w_{t,s}}{n_{t} n_{s}}\sum_{j\in \Lambda_{t} \cap \Lambda_{s} }^{}p_{s,j}\left(1-p_{t,j} \right),
\end{equation}
where $w_{t,s}$ describes, as before, the size of the overlap between the observation periods starting from the reference dates $\RD_{t}$ and $\RD_{s}$.

\subsection{Variance of the Long-Run Default Rate} \label{Distribution of the long-run default rate}

Based on the considerations for estimating the covariance of default rates in overlapping periods, we now discuss the variance of the long-run default rate. For the random variable $X_t$ with $t=1,\ldots,N$, we define $\E(X_t)=:\mu_t$ and $\sigma(X_t)=:\sigma_t$.\ For the long-run default rate $Z$, which is approximately normally distributed with $Z\sim \mathcal{N}(\mu,\sigma^2)$, we get
$$\mu=\frac{1}{R(N)}\cdot\sum_{t=1}^{N}\mu_t=\frac{1}{R(N)}\cdot \sum_{t=1}^{N}\frac{1}{n_t}\sum_{j \in \Lambda_{t}}^{}p_{t,j}$$
and
\begin{align} \label{est covariance}
R(N)^2\cdot \sigma^2&=\sum_{t=1}^{N}{\sigma_t^2+2\sum_{t=1}^{N-1}\sum_{s=t+1} ^{N}\Cov\left(X_t,X_s\right)} \notag\\
&=\sum _{t=1}^{N}{\sigma _i^2}+2\sum_{i=1}^{q-1}\sum _{t=1}^{N-i}\Cov\left (X_t,X_{t+i}\right),
\end{align}
where the second term contains the covariances caused by overlapping time periods. 
Using Equation \eqref{eq:Cov(X,Y)}, we end up with
\begin{align} \label{Optimization}
R(N)^2\cdot\sigma^2=&\sum_{t=1}^{N}\frac{1}{n_t^2}\sum_{j\in \Lambda_{t}}^{}p_{t,j}(1-p_{t,j})\notag\\
&\quad+\sum_{i=1}^{q-1}\frac{2(q-i)}{q}\sum_{t=1}^{N-i}\frac{1}{n_t \cdot n_{t+i}}\sum_{j\in \Lambda_{t}\cap \Lambda_{t+i}}p_{t+i,j}(1-p_{t,j}). 
\end{align}
For the final calibration test, it will be crucial to estimate the variance of the long-run default rate by a term that solely depends on the expected value $\mu$, the number of debtors per reference date, and debtor-independent variables, see Section \ref{Statistical Test on Portfolio Level}.

\section{Hypothesis test for long-term calibration}\label{hypothesis test for long-term calibration}

The aim of this section is to formulate a statistical test for comparing the realized long-run default rate with the estimated long-run default rate. 
\subsection{Statistical Test per Rating Grade} \label{hypothesis test rating grade}
We start with the simplest case, and consider a portfolio consisting of only one rating grade with an associated probability of default, which we call $\PD_{\class}$. That is, all considered obligors have the same unknown probability of default, which was estimated by $\PD_{\class}$.\ We recall that the measured long-run default rate, denoted by $\LTDR$, is the realization of a random variable $Z$ for which approximately $Z\sim \mathcal{N}(\mu,\sigma^2)$ holds. We consider the hypothesis test, described in Section \ref{Hypothesis Test}, with
\begin{align*}
 &\text{null hypothesis }H_0:\mu=\PD_{\class}\quad\text{and}\\
 &\text{alternative hypothesis }H_1:\mu\neq \PD_{\class}.
\end{align*}
From equation \eqref{Optimization}, we get
\begin{align*}
R(N)^2\cdot\sigma^2=\sum_{t=1}^{N}{\sigma_t}^2+\mu\left(1- \mu\right)\sum_{i=1}^{q-1}\lambda_i,
\end{align*}
where
\[
\lambda_i:=\frac{2(q-i)}{q}\sum_{t=1}^{N-i}\frac{k_{t,t+i}}{n_t\cdot n_{t+i}}.
\]
Moreover,
$$\sum _{t=1}^{N}{\sigma _t}^2=\sum _{t\in R_N}^{}{\frac{1}{n_t}\cdot\mu\left(1- \mu\right)=\mu\left(1-\mu\right)\sum_{t\in R_N}^{}\frac{1}{ n_t}},$$
which implies that
$$\sigma^2=\frac{\mu\left(1- \mu\right)}{R(N)^2}  \left(\sum_{t\in R_N}^{}\frac{1}{n_t}+\sum_{i=1}^{q-1}\lambda_i\right).$$
Therefore, assuming the validity of the null hypothesis, $\sigma$ is known.
In order to define the limits of the acceptance range for a significance level $\alpha\in\left(0,1\right)$,  we choose $k,K\in[0,1]$ in such a way that $$ \mathbb{E}\big( \varphi(Z)\big)\big|_{H_0}\leq\alpha,$$
where $\varphi$ is given by \eqref{eq:phi}. To that end, let
$$k=\PD_{\class}+\Phi ^{-1}\left(\frac{1}{2}\alpha\right)\cdot\sqrt{\frac{ \PD_{\class}\left(1-  \PD_{\class}\right)}{R(N)^2}  \left(\sum_{t\in R_N}^{}\frac{1}{n_t}+\sum_{i=1}^{q-1}\lambda_i\right)}$$
and
$$K=\PD_{\class}+\Phi ^{-1}\left(1-\frac{1}{2}\alpha\right)\cdot\sqrt{\frac{ \PD_{\class}\left(1-  \PD_{\class}\right)}{R(N)^2}  \left(\sum_{t\in R_N}^{}\frac{1}{n_t}+\sum_{i=1}^{q-1}\lambda_i\right)}.$$ 
Then, the calibration test passes if
$$\LTDR\in\left[k,K\right].$$
To see the influence of persisting customers on the acceptance range and the density function of the long-run default rate, we refer to Section \ref{sc. variance} and Section \ref{ex: persisting}.

\subsection{Statistical Test on Portfolio Level} \label{Statistical Test on Portfolio Level}
We now consider a portfolio of heterogeneous obligors, i.e., each obligor has its own individual probability of default.\ For this purpose, we denote the estimated PD of customer $j$ on reference date $\RD_t$ by $\widehat{p}_{t,j}$, the mean PD based on the rating model examined on the reference date $\RD_t$ by $\widehat{\PD}_t$ for $t=1,\ldots,N$, and the long-run central tendency by
$$\LTCT=\frac{1}{N}\cdot \sum_{t=1}^{N}\widehat{\PD}_t=\frac{1}{N}\cdot \sum_{t=1}^{N}\frac{1}{n_t}\sum_{j\in \Lambda_{t}}^{}\widehat{p}_{t,j}.$$
While on the level of rating grades, it is not an exception that the number of customers at a reference date $\RD_t$ might be zero, i.e., $n_t=0$, on a portfolio level, this rarely happens in practice. In this section, we therefore restrict our attention to the case, where $n_t>0$ for all $t=1,\ldots, N$ and therefore $R(N)=N$. 

We consider the hypothesis test from Section \ref{Hypothesis Test} with 
\begin{align*}
 &\text{null hypothesis }H_0:\mu=\LTCT\quad\text{and}\\
 &\text{alternative hypothesis }H_1:\mu\neq \LTCT.
\end{align*}
As in the case of individual rating grades, the test statistic depends on $\mu$ and $\sigma$ but, in this case, Equation \eqref{Optimization} can no longer be reduced to a term depending only on $\mu$.\ More precisely, if the correctness of the null hypothesis is assumed, the distribution parameters are only partially determined.\ If, however, we replace $\sigma$ by an expression $\sigma_{\min}(\mu)$ with $\sigma\geq \sigma_{\min}(\mu)$, the null hypothesis implies a concrete distribution for the underlying test statistic. This results in a narrower confidence interval and therefore a greater likelihood of committing a type I error. At the same time, the probability of a type II error is reduced, comparable to lowering the significance level.

The key challenge is to derive an analytic expression for $\sigma_{\min}$ that depends only on PD-independent model parameters, such as $N$, $M$, $n_t$, etc., and satisfies $\sigma_{\min}(\mu)\leq \sigma$ under the null hypothesis for all possible combinations of $p_{t,j}\in [\PD_{\min},\PD_{\max}]$.  Since we aim to minimize the distance between $\sigma_{\min}(\mu)$ and $\sigma$, it is sensible to define $\sigma_{\min}$ via the solution to a minimization problem with cost functional given by the right-hand side of \eqref{Optimization} under the side condition
\begin{equation} \label{eq: constraint}
    \mu=\frac{1}{N}\cdot \sum_{t=1}^{N}\frac{1}{n_t}\sum_{j\in \Lambda_{t}}^{}p_{t,j}\quad \text{and}\quad p_{t,j}\in[\PD_{\min} ,\PD_{\max} ].
\end{equation}
Observe that the right-hand side of \eqref{Optimization} is, in general, neither convex nor concave with respect to the $p_{t,j}$.\ This together with the high dimensionality of the problem, makes the numerical or analytical computation of solutions is rather involved.\ In order to circumvent this issue, we therefore replace the right-hand side of \eqref{Optimization} by a suitable linear cost functional, which turns the minimization into a linear program that can be solved analytically, up to a sort algorithm, and therefore also numerically with great efficiency.

First, note that mixed terms, i.e., products of PDs of the same customer at different reference dates, always appear in the sums
\[
\sum_{j\in \Lambda_{t}\cap \Lambda_{t+i}}\textcolor{orange}{p_{t+i,j}}\left(1-\textcolor{blue}{p_{t,j}}\right)
\]
 caused by the covariances. 
Hence in Formula \eqref{Optimization}, we want to substitute the subtrahend (blue) by an affine linear function with respect to the multiplier (orange) in a conservative way. On a portfolio level, it is appropriate to assume that an obligor's PD remains constant on average over a one year time horizon, i.e., we assume
\begin{equation}\label{eq.simplification}
   \sum_{j\in \Lambda_{t}\cap \Lambda_{t+i}}\textcolor{orange}{p_{t+i,j}}\left(1-\textcolor{blue}{p_{t,j}}\right) \approx \sum_{j\in \Lambda_{t}\cap \Lambda_{t+l}}\textcolor{orange}{p_{t+i,j}}\left(1-\textcolor{orange}{p_{t+i,j}}\right). 
\end{equation}
For most portfolios, this is not only a plausible but also a fairly conservative assumption regarding the portfolio variance since, for a given vector $(x_1,...,x_n)$ with $0<x_i<1$ for $i=1,\ldots,n$, the sum 
\[
\sum_{i=1}^{n}x_i(1-y_i)
\]
is minimized for $x_i=y_i$, under the assumption that, for each $i=1,\ldots, n$, there exists some $j=1,\ldots, n$ with $x_i=y_j$. Certainly, even more conservative assumptions can be made at this point with $\PD_{\max}$ being the most conservative subtrahend possible. The impact of such a choice is discussed in Section \ref{sc. reduction}, where we indicate that the acceptance ranges hardly change when using $\PD_{\max}$ instead of $p_{t+i,j}$.\ We point out that the subsequent discussion applies also to this choice without limitations.

From Equation \eqref{Optimization}, we get
\begin{align*} \label{Optimization qu}
N^2\cdot\sigma^2&=\sum_{t=1}^{N}\frac{1}{n_t^2}\sum_{j\in \Lambda_{t}}^{}p_{t,j}(1-p_{t,j})\\ 
&\quad +\sum_{i=1}^{q-1}\frac{2(q-i)}{q}\sum_{t=1}^{N-i}\frac{1}{n_t \cdot n_{t+i}}\sum_{j\in \Lambda_{t}\cap \Lambda_{t+i}}p_{t+i,j}(1-p_{t+i,j})\\
&=\sum_{t=1}^{N}\frac{1}{n_t^2}\sum_{j\in \Lambda_{t}}g_0(p_{t,j})+2\sum_{i=1}^{q-1}\sum_{t=1}^{N-i}\frac{1}{n_t \cdot n_{t+i}}\sum_{j\in \Lambda_{t}\cap \Lambda_{t+i}}g_i(p_{t+i,j})
\end{align*}
  with
\[
g_{i}\left(x\right):=\frac{q-i}{q}\cdot x \left(1-x \right)\quad\text{for }x\in (0,1)\text{ and }i=0,\ldots,q-1.
\]
Now, we replace each of the functions $g_i$ by an affine linear function $f_{i}$ such that the functions $g_{i}$ and $f_i$ coincide on $\PD_{\max}$ and $\PD_{\min}$. Since $g_i$ is concave and $f_i$ is affine linear, this implies that $g_i(x)\geq f_i(x)$ for all $x\in [\PD_{\min},\PD_{\max}]$.\ To be precise, the functions $f_i$ are given by 
\[
f_{i}(x):=\alpha_i\cdot x+c_i \quad\text{for all }x\in (0,1)
\]
with
\begin{align*}
 \alpha_i:=\frac{q-i}{q}(1-\PD_{\max}-\PD_{\min})\quad \text{and}\quad c_i=\frac{q-i}{q}\PD_{\min}\PD_{\max}
\end{align*}
for all $i=0,\ldots,q-1$. We therefore end up with the estimate
\begin{align} 
N^2\cdot \sigma^2 &= \sum_{t=1}^{N}\frac{1}{n_t^2}\sum_{j\in \Lambda_{t}}g_0(p_{t,j})+2\sum_{i=1}^{q-1}\sum_{t=1}^{N-i}\frac{1}{n_t \cdot n_{t+i}}\sum_{j\in \Lambda_{t}\cap \Lambda_{t+i}}g_i(p_{t+i,j})\notag \\
&\geq \sum_{t=1}^{N}\frac{1}{n_t^2}\sum_{j\in \Lambda_{t}}f_0(p_{t,j})+2\sum_{i=1}^{q-1}\sum_{t=1}^{N-i}\frac{1}{n_t \cdot n_{t+i}}\sum_{j\in \Lambda_{t}\cap \Lambda_{t+i}}f_i(p_{t+i,j})\notag \\
&=\sum_{t=1}^{N}\sum_{j\in \Lambda_{t}}^{}\alpha_{t,j}p_{t,j}+C, \label{Optimization lin}
\end{align}
where
\begin{align*}
 C=c_0\cdot \sum_{t=1}^{N}\frac{1}{n_t}
+2\cdot \sum_{i=1}^{q-1} c_{i} \cdot \sum_{t=1}^{N-i}\frac{ k_{t,t+i}}{n_t \cdot n_{t+i}}
\end{align*}
 and
\begin{align*}
 \alpha_{t,j}=\frac{1}{n_t^2}\alpha_0 \eins_{\Lambda_{t}}(j)+\frac{2}{n_t}\cdot \sum_{i=1}^{q-1} \frac{\alpha_{i}}{n_{t-i}}\eins_{\Lambda_{t-i}\cap \Lambda_{t}}(j)
\end{align*}
with $\Lambda_{t-i}:=\emptyset$ if $t-i<1$ for $t=1,\ldots, N$ and $j=1,\ldots, M$.\ We have therefore reduced the original problem to
\begin{equation} \label{eq: reduced problem}
   \text{minimize}\quad \sum_{t=1}^{N}\sum_{j\in \Lambda_{t}}^{}\alpha_{t,j}p_{t,j}\quad\text{under the side condition \eqref{eq: constraint}.}
\end{equation}
As shown in the Appendix \ref{ Minimization problem}, the solution $p(\mu)$ to the minimization problem \eqref{eq: reduced problem} can be easily computed analytically and numerically. We denote the minimal value by $m(\mu)$, and define $\sigma_{\min}^{2}(\mu)=\frac{m(\mu)+C}{N^2}$.

For the modified random variable $Z^*\sim \mathcal{N} \big(\mu,\sigma_{\min}^2(\mu)\big)$, we are now able to define the limits $k,K\in[0,1]$ of the acceptance range for a significance level $\alpha\in\left(0,1\right)$ in such a way that $ \mathbb{E}\left( \varphi\left(Z^*\right)\right)|_{H_0}\leq\alpha.$ For this we define
$$k:=\LTCT+\Phi^{-1} \left(\frac{1}{2}\alpha \right) \cdot \sigma_{\min}(\LTCT)$$
and
$$K:=\LTCT+\Phi^{-1} \left(1-\frac{1}{2}\alpha \right) \cdot \sigma_{\min}(\LTCT).$$
The calibration test passes if
$$\LTDR\in\left[k,K\right].$$

\section{Discussion and further considerations}\label{sec:4}

\subsection{Effect of Persisting Customers on the Variance of $Z$} \label{sc. variance}
 In this section, we aim to highlight the effects caused by persisting customers on the distribution of $Z$. We calculate the density function of the long-run default rate using a sample portfolio in a rating grade with $\PD=0.02$ and set $q=4$. The number of reference dates is $N=R(N)=32$, i.e., we consider a portfolio with a relatively short history of eight years.\ The number of customers is set to be constant over time with $n=50$.\ We furthermore assume that the ratio of persisting customers per time interval is constant over time.\ We have $k_{t,t+1} =45$ for all $t\leq 31$, $k_{t,t+2} =40$ for all $t\leq 30$ and $k_{t,t+3}=35$ for all $t\leq 29$. 
Only about $70\%$ of the persisting customers are still in the considered rating grade after 3 quarters, which represents a relatively high fluctuation. \\ 
These specifications determine the density function of the long-run default rate (orange).\ In comparison, one can see the density functions of the long-run default rate with a maximum and minimum number of persisting customers in green and blue, respectively. We have 
$$\sigma^2_{\text{orange}}\approx 4.13\cdot 10^{-5},\quad \sigma^2_{\text{blue}}\approx 1.23\cdot 10^{-5},\quad\text{and}\quad \sigma^2_{\text{green}}\approx 4.71\cdot 10^{-5}.$$
Figure \ref{fig:variance} shows that that correlation effects caused by overlapping time windows should not be neglected since, otherwise, the acceptance range would be way too tight, see also Example \ref{ex: persisting} below.
\begin{figure}[htb]\label{fig:variance}
\includegraphics[width=\textwidth]{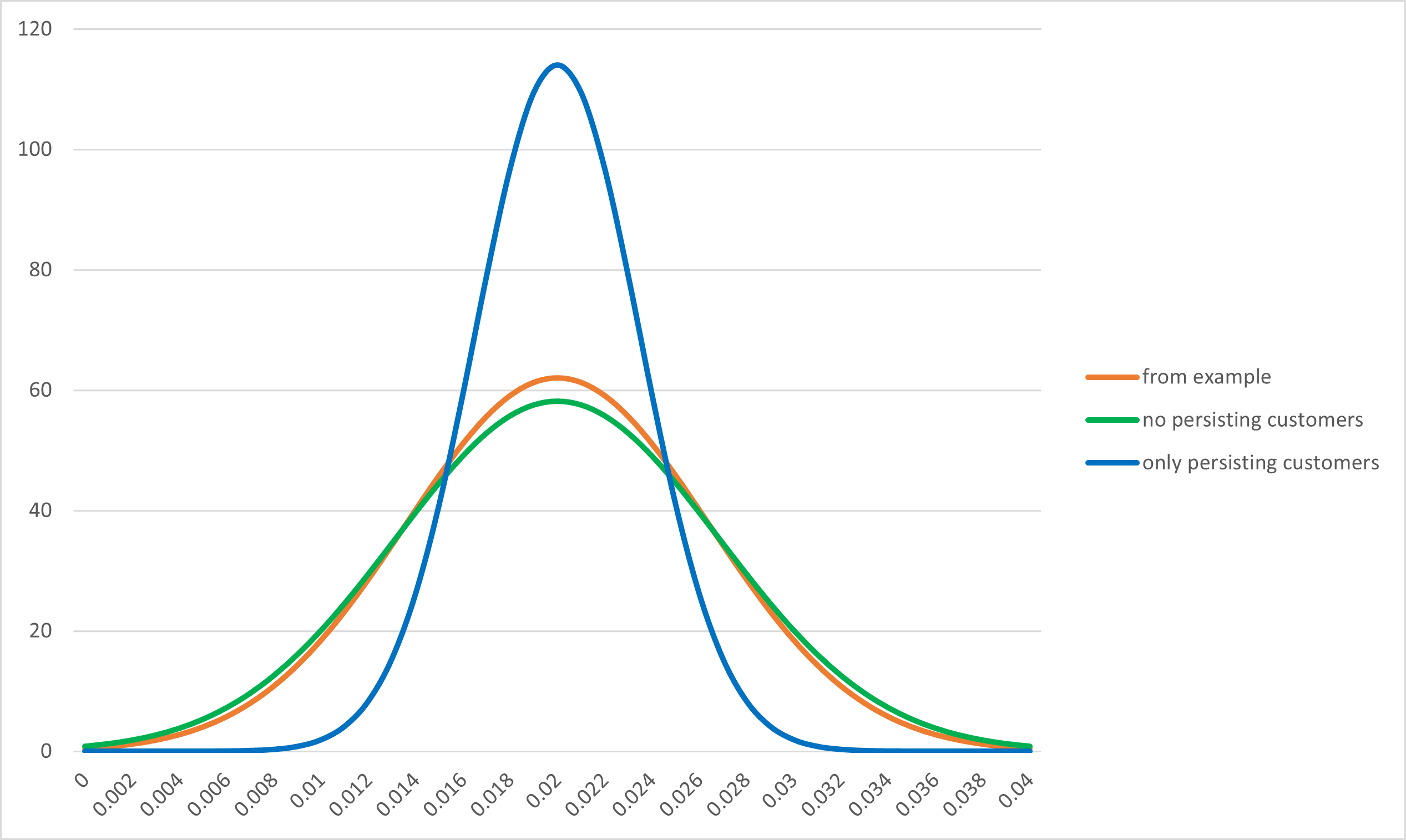}
\caption{Density functions of long-run default rates}
\label{Density Functions}
\end{figure}

\subsection{Effect of Persisting Customers on Acceptance Range} \label{ex: persisting}
We examine the influence of the number of persisting customers on the width of the acceptance range. Again, we choose $q=4$ and, for the sake of simplicity, we consider the case at the level of the individual grades choosing $\PD_{\class} = 0.02$. The proportion of persisting customers is usually lower at the level of individual grades than in the overall portfolio, since a rating migration automatically means that the customer in question is no longer in the sample for the individual grade but is still in the overall portfolio.

The number of reference dates is $N=R(N)=60$ and the number of customers is constant, i.e., $n_t:=n=50$ for $t=1,\ldots,50$. The extreme cases regarding persisting customers now represent the two scenarios in which, on the one hand, all customers remain the same over the entire history and, on the other hand, none of the customers in a specific quarter occurs in the previous quarter or in the following quarter.\\
The first case implies $k_{t,t+1}=n$ for $t\leq N-1$, $k_{t,t+2}=n$ for $t\leq N-2$ and $k_{t,t+3}=n$ for $t\leq N-3$. Thus, for two one-year default rates $X_{t}$ and $X_{s}$ where $1\leq t< s \leq 50$, we have 
$$\Cov\left(X_{t},X_{s} \right)=\frac{1}{n}\cdot w_{t,s} \left( 1- \PD_{\class} \right)\cdot \PD_{\class}.$$
For the second case, we have $k_{t,t+1}=0$ for $t\leq N-1$, $k_{t,t+2}=0$ for $t\leq N-2$ and $k_{t,t+3}=0$ for $t\leq N-3$, implying 
$$\Cov\left(X_{t},X_{s} \right)=0
$$
for two one-year default rates $X_{t}$ and $X_{s}$. Since the covariance between default rates is always $0$, in this case, the distribution of the long-run default rate is the same as if correlation effects due to overlapping time windows were ignored completely. Note that such a scenario is extremely unlikely, especially for a large number of ratings, and is therefore purely hypothetical since
customers would constantly switch rating grades, which would imply a high level of instability in the rating system.\\
In Figure \ref{Range of Bounds}, we can see that the width of the acceptance range to a given level $\alpha$ for a portfolio only consisting of persisting customers is almost doubled compared to a portfolio with no persisting customers with respect to a three months time horizon.
\begin{figure}[htb]
\includegraphics[width=\textwidth]{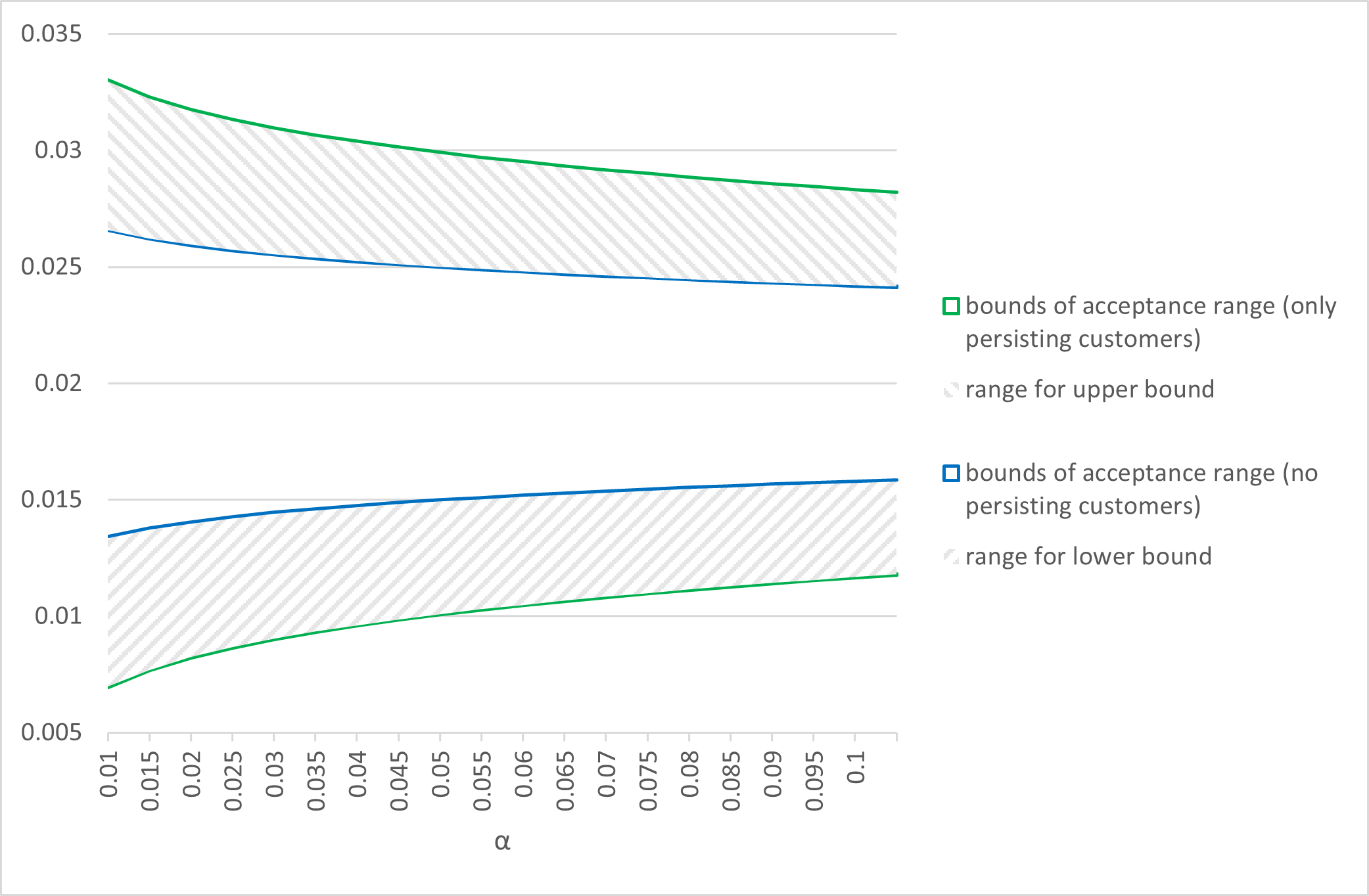}
\caption{Effect of proportion of persisting customers on acceptance range per confidence level $\alpha$}
\label{Range of Bounds}
\end{figure}
\begin{figure}[htb]
\includegraphics[width=\textwidth]{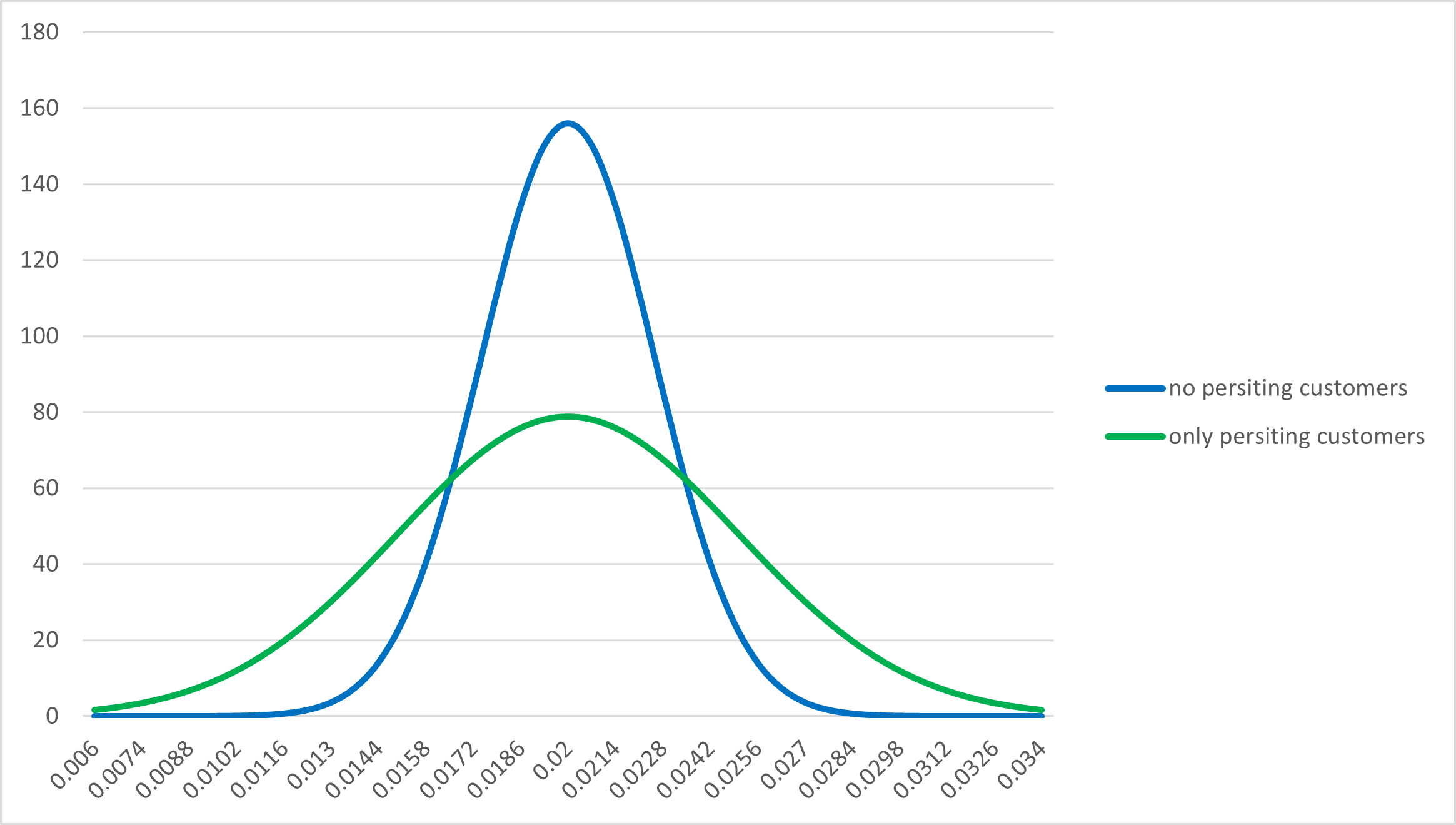}
\caption{Comparison between density functions of long-run default rates}
\label{Densities_2}
\end{figure}

\subsection{Some Thoughts on the Rate of Convergence} \label{sc: rate of convergence}
In view of Section \ref{Normal Distribution of long run default rate}, we briefly discuss the rate of convergence and mention a few particularities that need to be taken into account concerning the asymptotic behaviour of the long-run default rate $Z$.\ When using the classical central limit theorem, it is common to specify a minimum number of random variables ensuring a reasonable approximation of the normal distribution. In our case, additional factors need to be taken into account since, for example, a large number of customers, i.e., a large number $M$ in 
$$
Z=\frac{1}{N}\sum\limits_{t=1}^{N}\frac{1}{n_t}\sum\limits_{j=1}^M y_{t,j}=\sum\limits_{j=1}^M Y_j,
$$
does not directly lead to a good approximation. This stems from the calculation logic of the long-run default rate. For a simple illustration, we assume that we are in the setting, where all customers have the same PD with $\PD=0.01$. For $N=1$ and $M=1000$ or, analogously, $N=1000$ and $n_t=1$ for $t=1,\ldots, 1000$, the normal approximation is way better than for $N=2$ with $n_1=1$ and $n_2=999$ since, in the first case, $\P(Z\geq 0,5)\approx 0$ while, in the second case, $\P(Z\geq 0,5)>0,01.$ Thus, the rate of convergence depends on the number of customers, the number of reference dates, and the number of customers per reference date. While the previously described scenarios are uncommon at portfolio level, at the level of individual rating grades, constellations can arise in which, on certain reference dates, there is only one customer in the corresponding rating grade.

In the sequel, we elaborate more on this particularity and aim to provide a rule of thumb, i.e., conditions for $N$, $n_{\max}$, and $n_{\min}$ that imply a satisfactory normal approximation. To that end, we consider a synthetic portfolio. For the sake of simplicity, we assume that $q=1$ and simulate a convolution of weighted binomial distributions. We study the difference between the distribution functions of the long-run default rate on the level of a rating grade with $\PD=0.01$ when using a simulation on the one hand and using the normal approximation on the other hand in different scenarios. We assume the number of reference dates is $N=56$.\ The number of customers $n_t$ per reference date $\RD_t$ and thus the number of all customers $M$ in the portfolio varies between the considered scenarios.

\begin{figure}[htb]
\includegraphics[width=\textwidth]{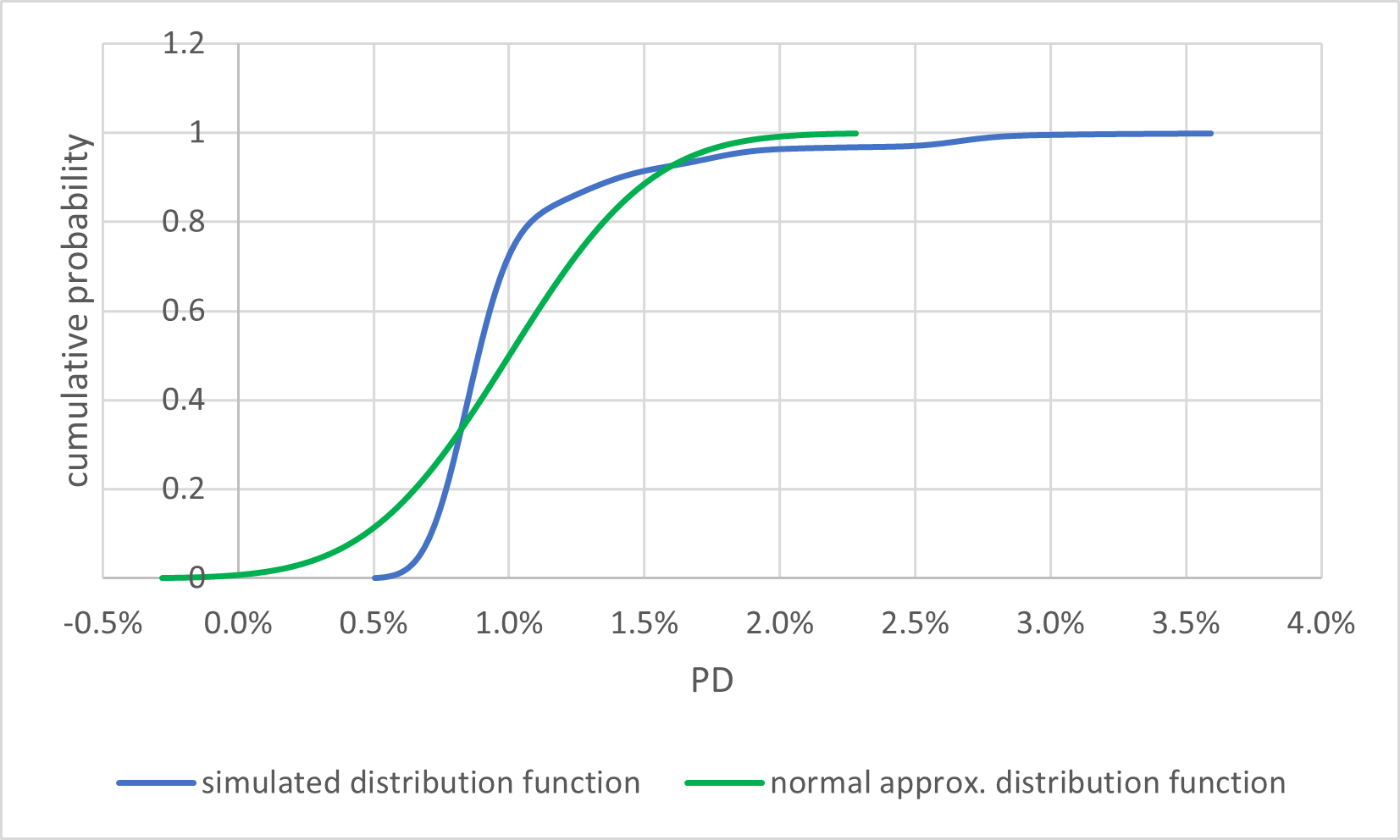}
\caption{Distribution functions:\ simulation and normal approximation with $n_t\leq5$ for $t=1,\ldots,8$, and $n_t=100$ for $t=9,\ldots, 56$  }
\label{distribution functions 1}
\end{figure}

\begin{figure}[htb]
\includegraphics[width=\textwidth]{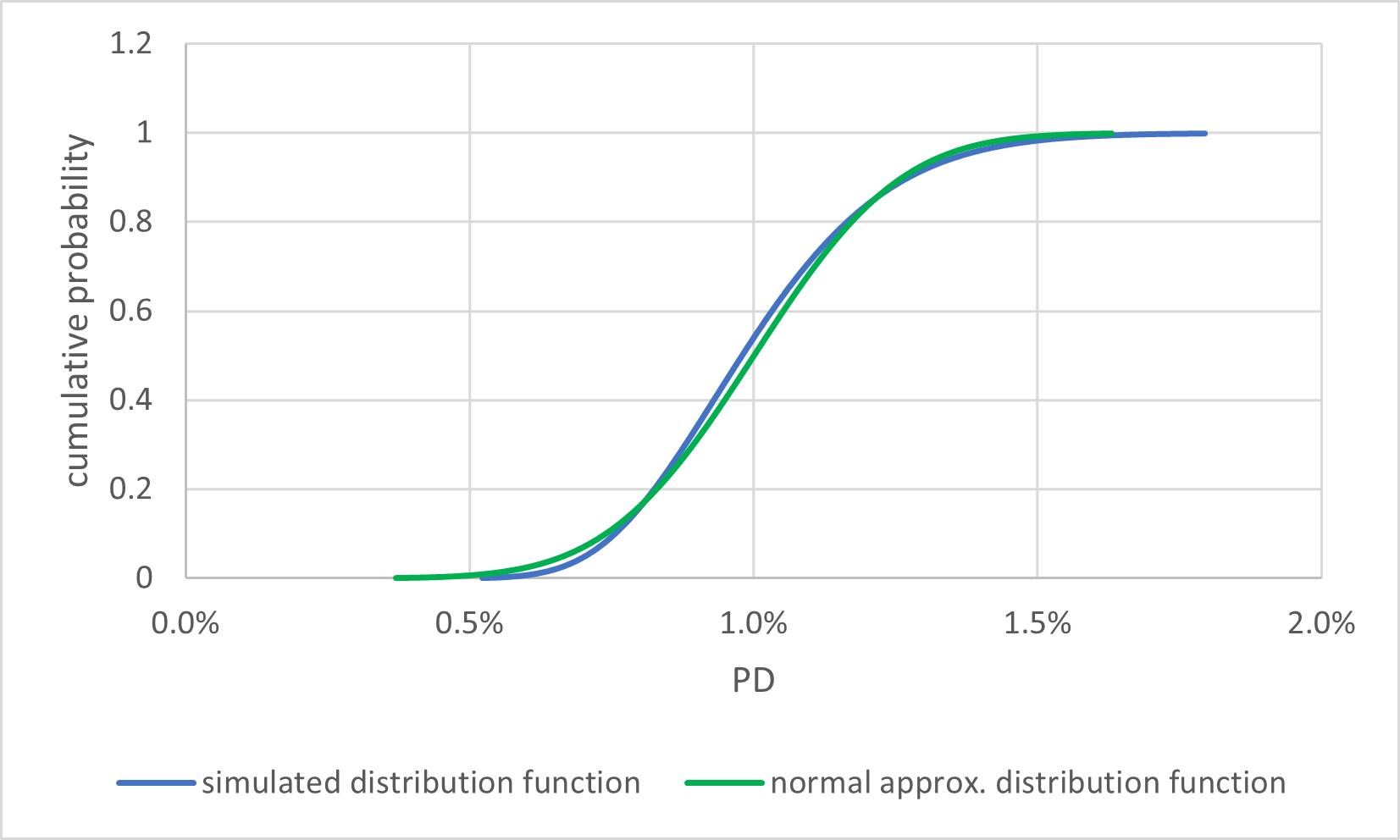}
\caption{Distribution functions:\ simulation and normal approximation with $n_t\leq10$ for $t=1,\ldots,8$, and $n_t=100$ for $t=9,\ldots, 56$  }
\label{distribution functions 2}
\end{figure}

\begin{figure}[htb]
\includegraphics[width=\textwidth]{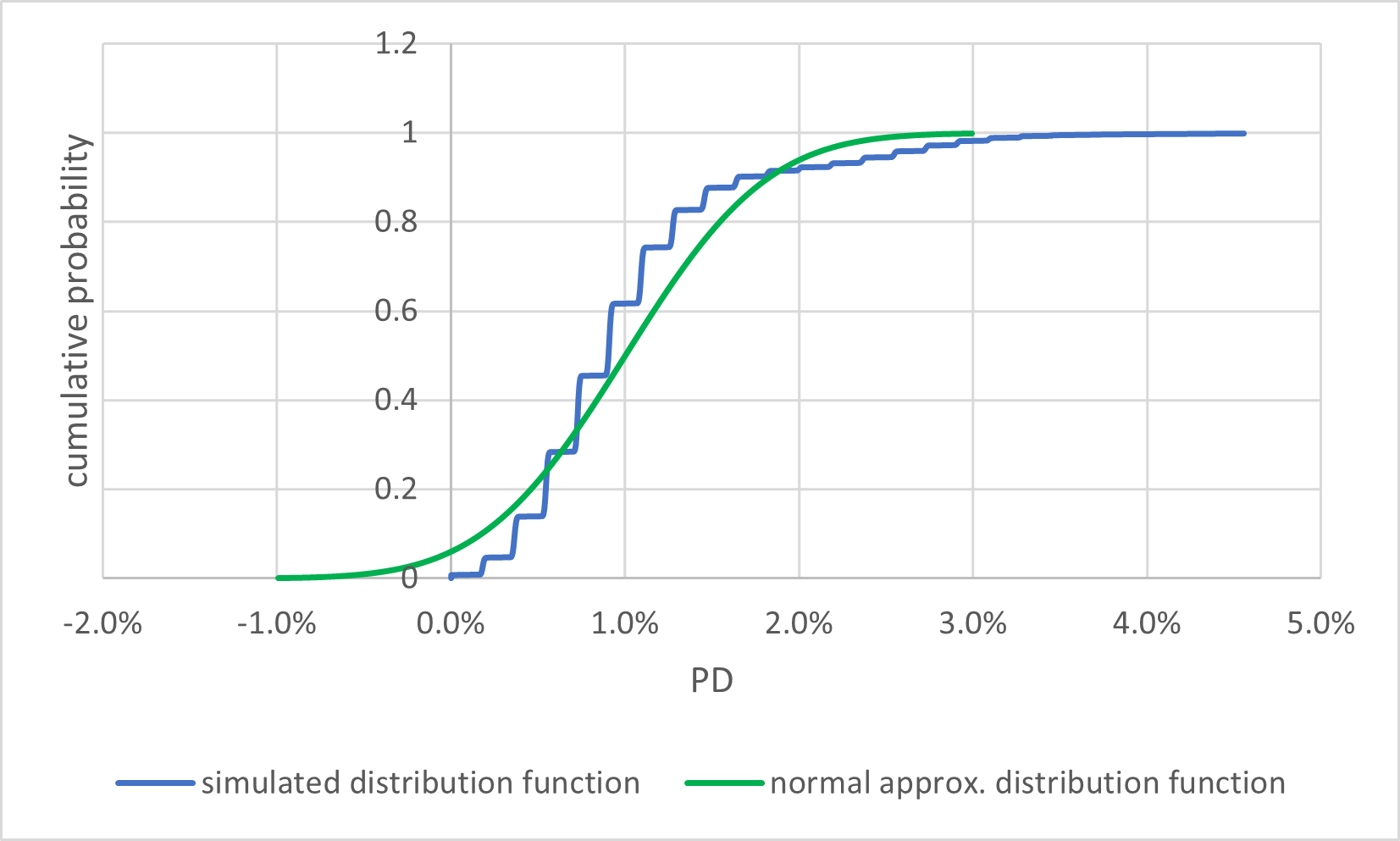}
\caption{Distribution functions:\ simulation and normal approximation with $n_t=1$ for $t=1,\ldots,8$, and $n_t=10$ for $t=9,\ldots, 56$  }
\label{distribution functions 4}
\end{figure} 
\begin{figure}[htb]
\includegraphics[width=\textwidth]{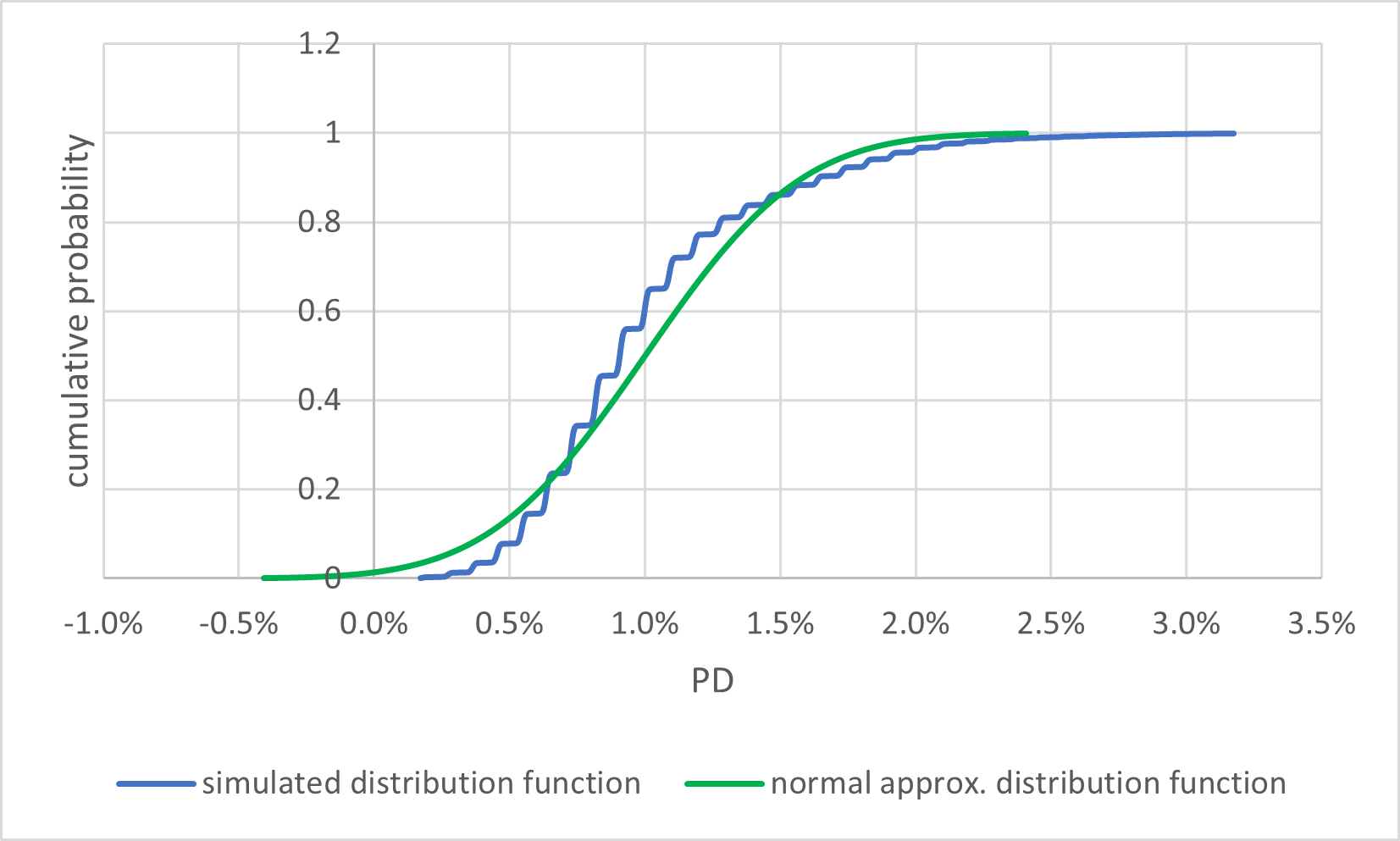}
\caption{Distribution functions:\ simulation and normal approximation with $n_t=2$ for $t=1,\ldots,8$, and $n_t=20$ for $t=9,\ldots, 56$  }
\label{distribution functions 5}
\end{figure} 

In Figure \ref{distribution functions 1}, there are eight reference dates where only one to five customers are in the portfolio. On the other reference dates, we assume the number of customers to be constant with $n_t=100$. We observe large differences in the two distribution functions, especially in the tails, which are particularly relevant for the test.\ If we now increase the number of customers on the critical reference dates (those with $n_t\leq5$) to $10$, we observe, in Figure \ref{distribution functions 2}, that the distribution functions are almost the same. Hence, the poor convergence in Figure \ref{distribution functions 1} is caused by the few reference dates with very few customers. Furthermore, we see that, in this special case, $\frac{n_{\min}}{n_{\max}}\geq \frac{1}{10}$ seems to be an appropriate bound in order to obtain a satisfactory approximation.

For a second comparison, we now reduce the number of customers significantly to $1\leq n_t\leq 10$. Again, the number of customers equals one only on the first eight reference dates. In Figure \ref{distribution functions 4}, we still see major differences in the distribution functions. Doubling the number of customers leads to Figure \ref{distribution functions 5}, where we already obtain useful results.

On the other hand, if the number of customers is very low, e.g., $n_t=2$ for half of all reference dates, we still get a good approximation, cf.\ Figure \ref{distribution functions 6}. 
\begin{figure}[htb]
\includegraphics[width=\textwidth]{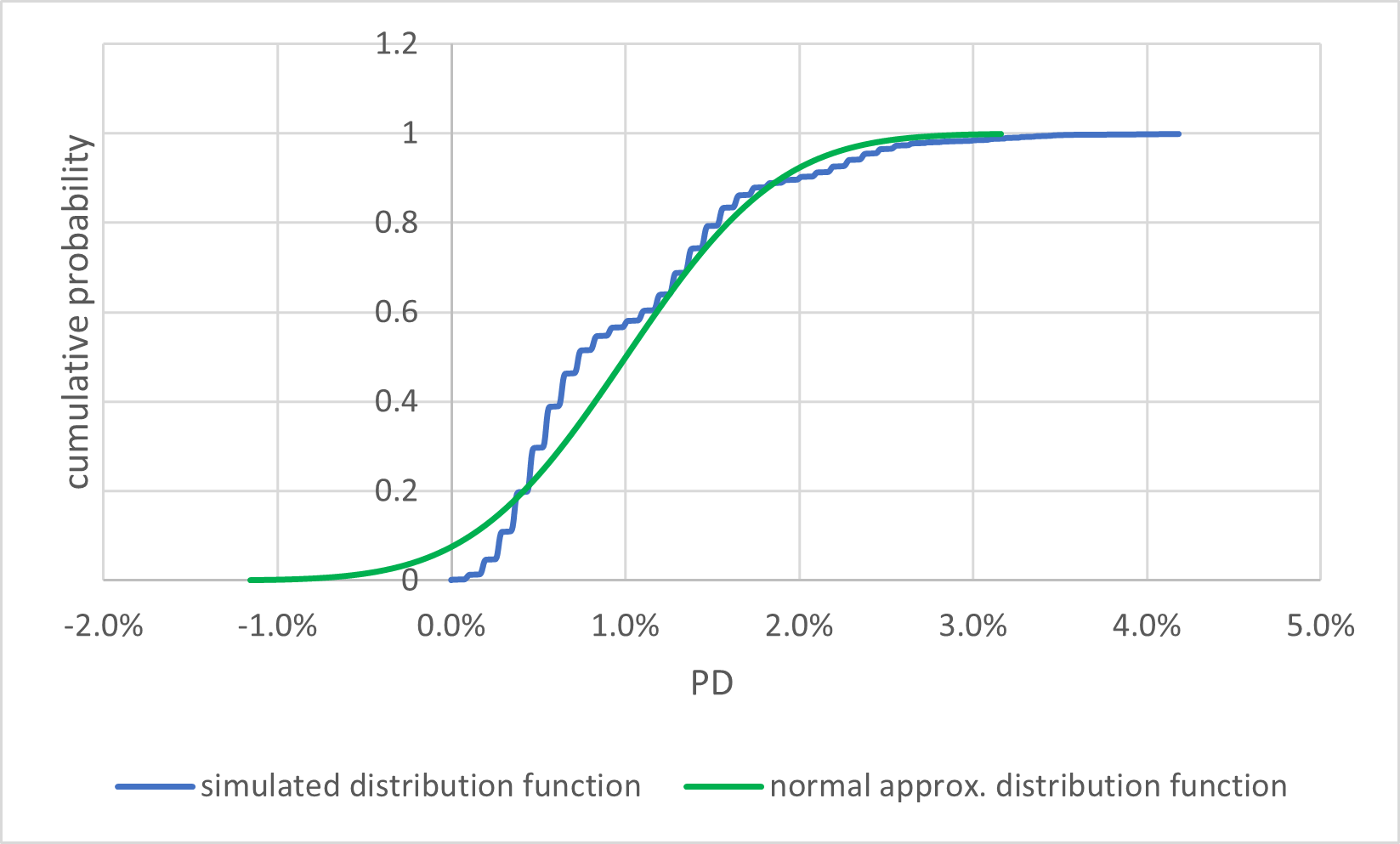}
\caption{Distribution functions:\ simulation and normal approximation with $n_t=2$ for $t=1,\ldots,28$, and $n_t=20$ for $t=29,\ldots, 56$  }
\label{distribution functions 6}
\end{figure} 
From our examples and statements from literature, we conclude that $N\geq30$, $n_{\min}\geq2$, and $\frac{n_{\min}}{n_{\max}}\geq \frac{1}{10}$ seem to be useful conditions to guarantee a satisfying normal approximation.

\subsection{An alternative way to bound the variance}

In the Appendix \ref{Appendix E}, we propose a different way to estimate the variance in an even more conservative way. If one accepts the associated stricter test as a conservative check of the calibration, one can use 
\[
\sigma_{{\rm alt}}^2(\mu):= \frac{1}{N}\left(C+K_1 \right)\mu-\frac{K_2}{N^2}
\]
as the variance of the test statistic with parameters $K_1$ and $K_2$ depending only on the underlying portfolio, see Appendix \ref{Appendix E} for the details.

In this case, we do not have to solve an optimization problem, and can determine the variance directly using the portfolio-specific parameters. However, we additionally need conservative estimates for two parameters that are not directly determined by the null hypothesis. Again, we refer to the  Appendix \ref{Appendix E} for the details.\ We briefly discuss the consequences of this alternative method at this point.

The determination of $\sigma_{{\rm alt}}^2(\mu)$ is based on a successive estimation of the variance of the test statistic against the maximum conservative value in each single step. This procedure can therefore be understood as a maximally conservative solution to the test problem.\ In the case of a portfolio with approximately constant size, approximately constant distribution of default probabilities over time, and no other special features, the conservatism can often be tolerated.\ In case of build-up or run-down portfolios or portfolios with a focus on high investment grade obligors, the risk of a type I error may become disproportionately large and the test may therefore not be suitable.\ This can be easily illustrated by the following two points.\ The estimation of quadratic terms on the left-hand side of \eqref{eq: estimate with C} is done using $\PD_{\max}$ regardless of how many obligors belong to the associated rating grade. For portfolios with mainly obligors with good ratings, this estimate is certainly too conservative. Moreover, the parameters $K_1$ and $K_2$ depend on
\[
\min\bigg\{\frac{1}{n_t}\,\bigg|\,t=1,\ldots, N\text{ and } n_t> 0\bigg\}\quad\text{and}\quad\min\big\{k_{t,t+i}\,\big|\,t=1,\ldots, N-i\big\},
\]
i.e., for portfolios where the number of customers or the number of persisting customers changes strongly over time, a lot of information is lost with this method. Thus, the estimate on the variance of the test statistic in Section \ref{Statistical Test on Portfolio Level} is much less conservative and suitable for significantly more portfolios.

\subsection{Additional Conditions on the Rating Distribution}

In this section, we give an outlook and offer suggestions as to which requirements can be placed on rating distributions. The implementation of some suggestions is briefly outlined. The rating distribution of $p(\mu)$ obtained by minimizing \eqref{eq: reduced problem} tends to be U-shaped, cf. Figure \ref{Distribution 1}, with all but at most one rating in the best or the worst rating grade.\ A U-shaped rating distribution, i.e., a distribution almost exclusively between the two extreme grades is highly unusual, both in the case of a large number of ratings and in a portfolio that hardly has any customers in the lower rating grade.
\begin{figure}[htb]
\includegraphics[width=\textwidth]{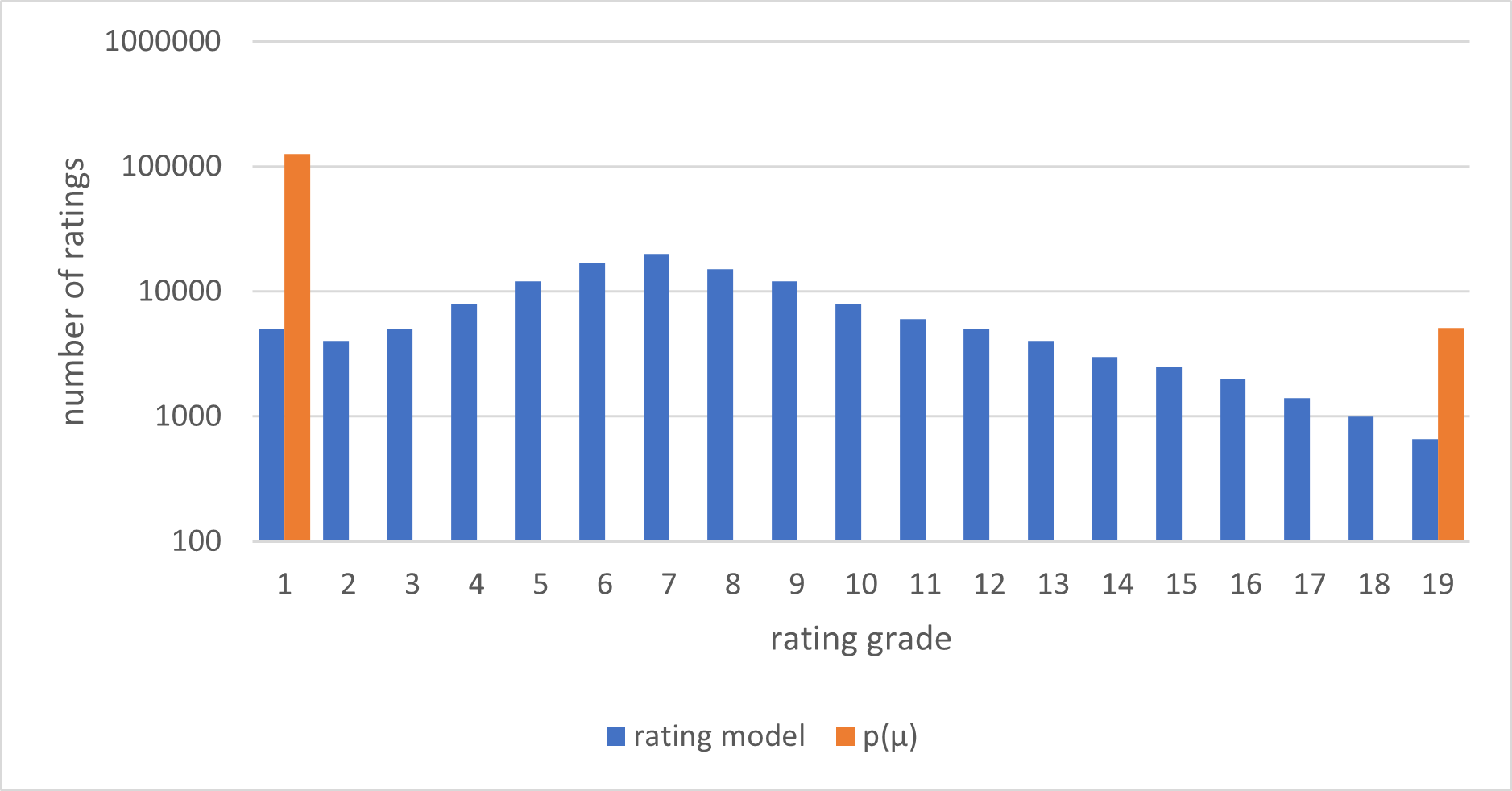}
\caption{Rating distribution of rating model and rating distribution after minimization without additional assumptions on the distribution}
\label{Distribution 1}
\end{figure}

Moreover, high values of $\PD_{\max}$ lead to a narrowing of the acceptance range.\ Depending on the portfolio, this reduction can be disproportionately strong if, for example, the bad rating grades in low-default portfolios are heavily underrepresented.\ Hence, the question may be raised why the rating distribution, estimated by the rating model, is not used directly.\ As already mentioned, this would, however, require assumptions that go far beyond the null hypothesis, and specifying the rating distribution in that way, would automatically determine the null hypothesis.\ However, it may be sensible to place additional yet conservative conditions on the rating distribution in \eqref{eq: constraint}.\ 

For example, one could demand that ratings exhibit a conservative distribution, given by a function $\varphi_1\colon \{\PD_{m-n+1},...,\PD_m \}\to [0,1]$ across the $n$ worst rating grades with
\[
\P\big(p_{t,j}=\PD_k \: \big| \: p_{t,j}\geq \PD_{m-n+1} \big)=\varphi_1(\PD_k) \quad\text{for }k\geq m-n+1.
\]
For most portfolios a conservative choice of $\varphi_1$ would be $\varphi_1\left(\PD_k  \right)=\frac{1}{n}$, cf.\ Figure \ref{Distribution 2}.

Another approach would be to trust the distribution estimated by the rating model to some extent, and require that there is at least a fixed fraction $\delta \in [0,1]$ of the number of ratings per grade given by this distribution with density $\varphi_2\colon\{\PD_{1},...,\PD_m \}\to [0,1]$.\ Then, in addition to \eqref{eq: constraint}, one might demand
\[
n\left(\PD_k \right) \geq n_{\min}\left(\PD_k \right) := \Bigg\lfloor \sum_{t=1}^{N}n_t \cdot \varphi_2(\PD_k) \cdot \delta \Bigg\rfloor,
\]
where $n\left(\PD_k \right)$ is the number of the ratings in rating grade $k$, cf.\ Figure \ref{Distribution 3}.

\begin{figure}[htb]
\includegraphics[width=\textwidth]{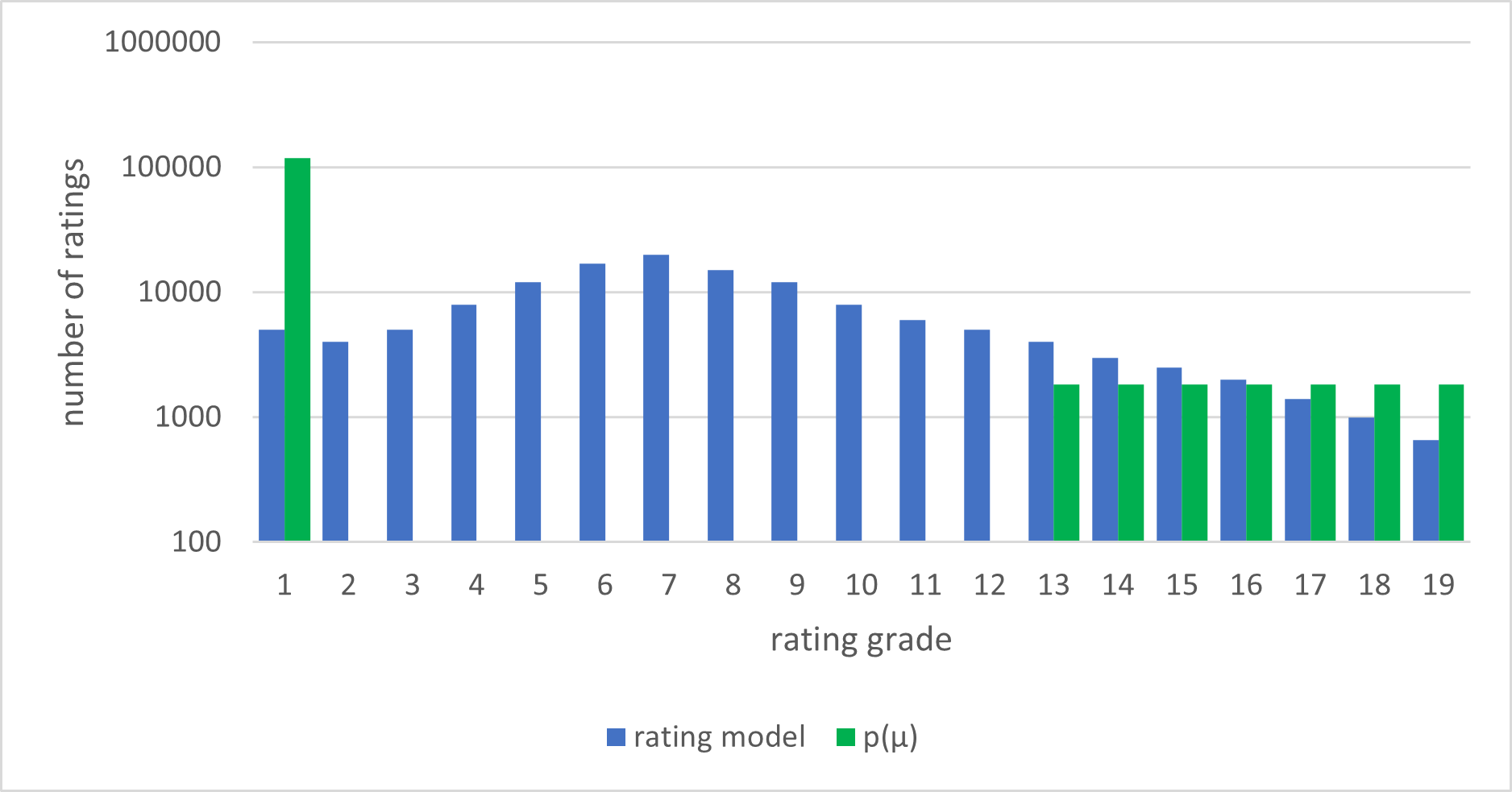}
\caption{Rating distribution of rating model and rating distribution after minimization with assumption of uniform distribution for bad rating grades}
\label{Distribution 2}
\end{figure}
\begin{figure}[htb]
\includegraphics[width=\textwidth]{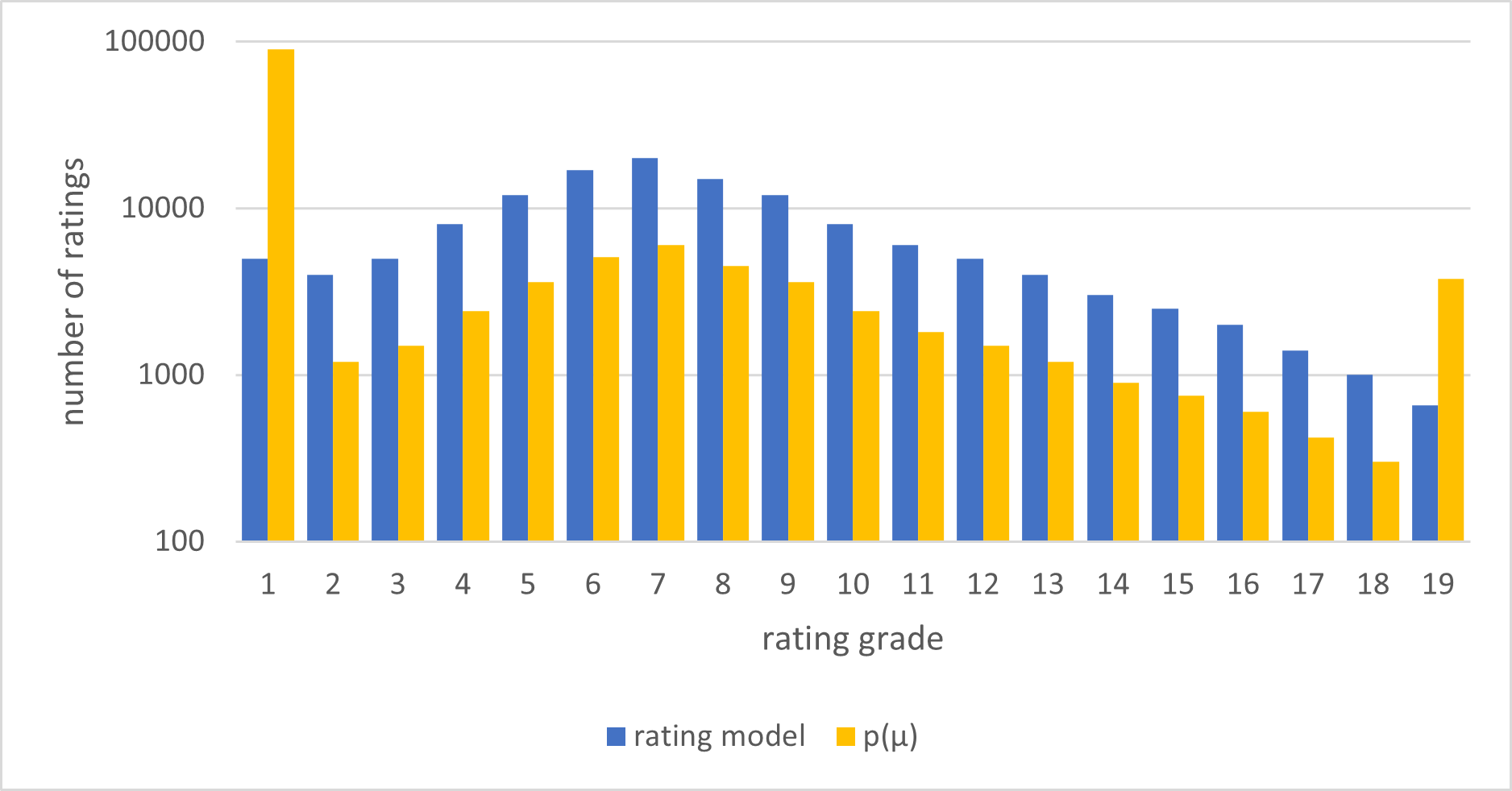}
\caption{Rating distribution of rating model and rating distribution after minimization with assumption of trusting the distribution of the model to a certain degree}
\label{Distribution 3}
\end{figure}

We conclude this section with a modified version of previously described approach using a uniform distribution among the $n$ worst rating grades. As before, we consider the $n$ worst rating grades $\left\{\PD_{m-n+1},...,\PD_m \right\}$ with $\PD_{m-n+1}>\mu$.\ Our aim is to formulate the requirement of a uniform distribution in such a way that we can again use the method for minimizing the variance already presented. To that end, we define
\[
\overline{\PD}:=\frac{1}{n}\sum_{i=m-n+1}^{m}\PD_i.
\]
We follow the idea from Section \ref{Statistical Test on Portfolio Level}, and proceed exactly as in the Appendix \ref{ Minimization problem} with $\overline \PD$ instead of $\PD_{\max}$. The effect on the variance or, almost equivalently, the acceptance range is illustrated in Figure \ref{Figure Modified uniform}.\ On the one hand, this approach avoids unnecessary high conservatism. On the other hand, it still sufficiently conservative and simple to implement. 

    \begin{figure}[htb]\label{Figure Modified uniform}
\includegraphics[width=\textwidth]{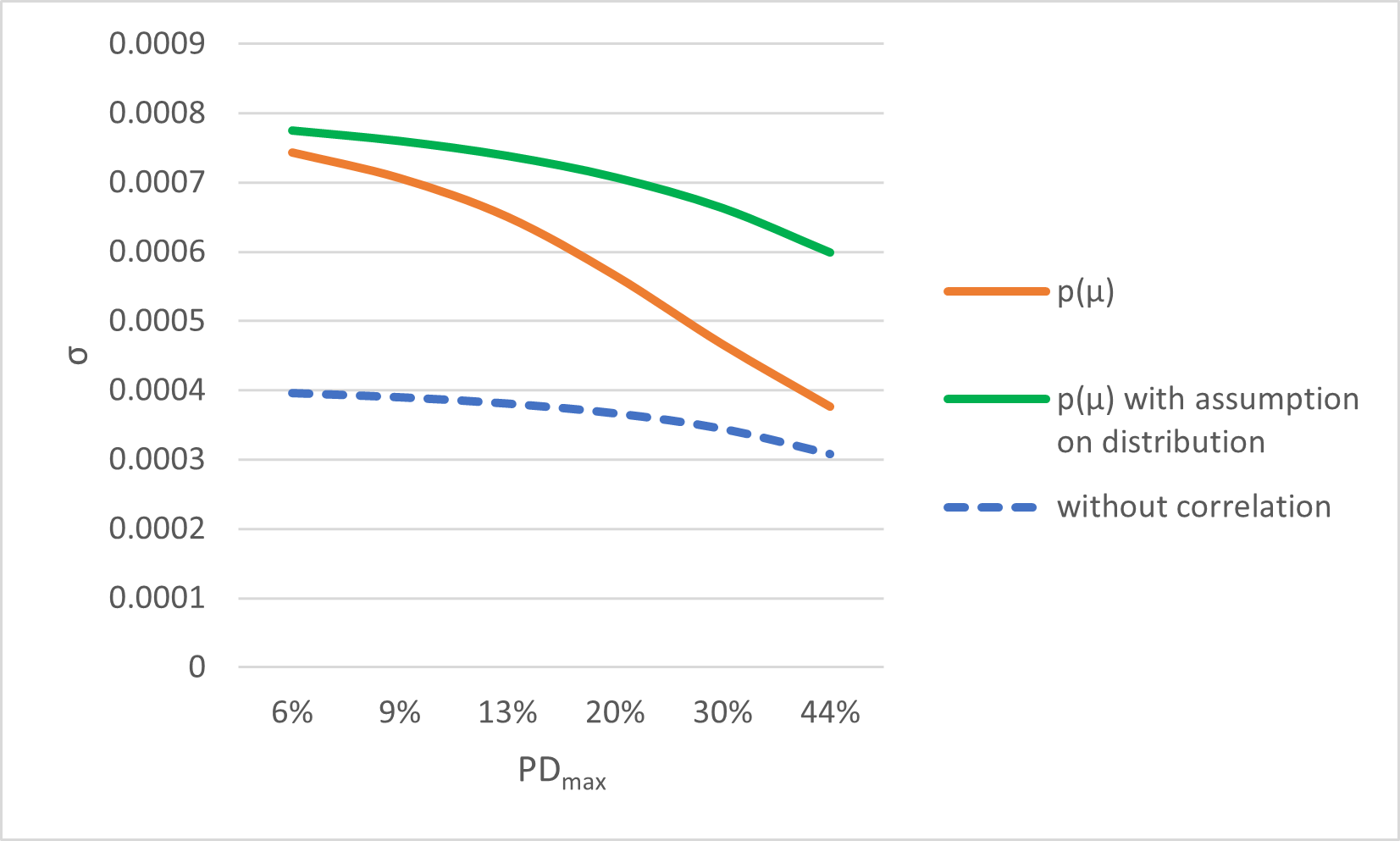}
\caption{Dependence of sigma on $\PD_{\max}$}
\label{PD_max Sigma}
\end{figure}

\subsection{Impact of Simplification on the Acceptance Range} \label{sc. reduction}
In this section we discuss, how different choices of affine linear functions for the simplification of the variance effect the acceptance range. We focus on the impact when choosing either the constant function $g=\PD_{\max}$ or identity function as in \eqref{eq.simplification}.

We start by computing the reduction of the acceptance range due to choosing the affine linear function as the constant $\PD_{\max}$ using a synthetic portfolio. We choose the number of customers per reference date to be constant with $n=1000$, let the number of reference dates be $N=60$, and $q=4$. We define $\PD_{\max}=0.2$, $\PD_{\min}=0.0003$, and set $\mu=0.01$. We recall that the components of the minimal solution $p(\mu)$ of \eqref{eq: reduced problem} consist only of  $\PD_{\max}$, $\PD_{\min}$ and $\mu_{k_0+1}$, where the latter is uniquely determined by the minimization problem. 

We now choose $g=\PD_{\max}$, i.e., the most conservative function possible. Proceeding exactly as in Section \ref{Statistical Test on Portfolio Level} and in the Appendix \ref{ Minimization problem}, we simply get different values for $\alpha_i$ and $c_i$ with $i=0,\ldots,q-1$, namely,
\[
\begin{array}{rcl}
    \alpha_i &= &\frac{q-i}{q}(1-\PD_{\max}) \\
   c_i  & =&0. \\
   
\end{array}
\]
Estimating the variance as in Section \ref{Statistical Test on Portfolio Level}, once with $g=\PD_{\max}$ and once with $g=\text{id}$, we get
\begin{equation}\label{eq quotient}
\frac{\left| K(\PD_{\max})-k(\PD_{\max}) \right|}{\left|K(\text{id})-k(\text{id})\right|}\approx 0.994,
\end{equation}
where $K(\PD_{\max})$ and $k(\PD_{\max})$ and $K(\text{id})$ and $k(\text{id})$ denote the upper and lower bound of the acceptance range for $g=\PD_{\max}$ and $g=\text{id}$, respectively. In this particular setting, we see that the choice of $g$ does not have a substantial influence on the size of the acceptance range.

The difference in the size of the acceptance range is largely influenced by the contributions of the obligors with $\PD=\PD_{\min}$.\ 
By Theorem \ref{Theorem}, large values of $\PD_{\max}$ lead to an increased number of pairs $(t,j)$ with $p_{t,j}=\PD_{\min}$.\ Hence, the size of the quotient in \eqref{eq quotient} is reduced for very large values of $\PD_{\max}$ and very small values of $\mu$. For instance, by redefining $\PD_{\max}=0.45$ and $\mu=0.0006$ we get 
\[
\frac{\left| K(\PD_{\max})-k(\PD_{\max}) \right|}{\left|K(\text{id})-k(\text{id})\right|}\approx 0.778.
\]
As a result, relevant differences in the acceptance ranges mostly occur in settings with very large values of $\PD_{\max}$ and values of $\mu$ that are close to $\PD_{\min}$.\ In practice, the maximally conservative choice of $g=\PD_{\max}$ is therefore not recommendable, since it assumes that on each reference date all persisting customers have been in the worst rating grade, independent of their current rating. Concluding, it is recommended to choose $g=\text{id}$ rather than $g=\PD_{\max}$.

\appendix

\section{Minimization Problem}\label{ Minimization problem}

Let $n\in \N$ and $\alpha_i \geq 0$ for $i=1,\ldots,n$. In this section, we compute the minimum of the function
\begin{equation} \label{Opt. Problem}
    f\colon \R^n \to \R,\quad x=\left(x_1,\ldots,x_n \right)\mapsto f(x):=\sum_{i=1}^{n}\alpha_i x_i
\end{equation}
 under side conditions
\begin{equation} \label{constraints App}
    \sum_{i=1}^{n}\beta_i x_i=m\quad \text{and}\quad c_1\leq x_i \leq c_2\quad \text{for }i= 1,\ldots,n
\end{equation}
with given constants $\beta_i >0$ for $i=1,\ldots,n$, $m>0$, and $0<c_1<c_2$.
We sort the ratios $\big(\alpha_i/\beta_i\big)_{i=1,\ldots,n}$ in descending order and denote by $\alpha_{(k)}$ and $\beta_{(k)}$ are the numerator and denominator of the $k$-th largest element of these ratios. In case of equal values among the ratios $\big(\alpha_i/\beta_i\big)_{i=1,\ldots,n}$, the one with smaller index $i$ is listed first.\
Moreover, $x_{(k)}$ denotes the variable of $f$ with coefficient $\alpha_{(k)}$.
We define
\[
k_0 :=\max \left\{k\: \left| \: \sum_{j=1}^{k}c_1\cdot \beta_{(j)}+\sum_{j=k+1}^{n}c_2\cdot \beta_{(j)}\geq m \right. \right\}.
\]
\begin{theorem} \label{Theorem}
The cost function in \eqref{Opt. Problem} is minimized under \eqref{constraints App} for 
\[
x_{(k)}=\begin{cases}
c_1,& k\leq k_0,\\
\mu_{k_0+1}, &k=k_0+1, \\
 c_2, &k_0+2\leq k\leq n,
\end{cases}
\]
where the value $\mu_{k_0+1}$ is uniquely determined by \eqref{constraints App} through 
\[
\mu_{k_0+1}=\frac{m-\sum_{j=1}^{k}c_1\cdot \beta_{(j)}-\sum_{j=k+2}^{n}c_2\cdot \beta_{(j)}}{\beta_{(k_0+1)}}.
\]
\end{theorem}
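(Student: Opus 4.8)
The plan is to treat \eqref{Opt. Problem}--\eqref{constraints App} as a linear program over the polytope $P:=\{x\in\R^n : \sum_{i=1}^n\beta_i x_i = m,\ c_1\le x_i\le c_2\}$ and to pin down the optimum by a single exchange (rearrangement) argument, avoiding any appeal to the simplex method or duality. First I would record that $P$ is compact and, assuming feasibility (equivalently $\sum_i c_1\beta_i\le m\le\sum_i c_2\beta_i$), non-empty, so that the continuous linear objective $f$ attains its minimum on $P$; I fix a minimizer $x^*$.

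The heart of the argument is the following exchange step. Suppose there are indices $p,q$ with $\alpha_p/\beta_p>\alpha_q/\beta_q$, $x^*_p>c_1$, and $x^*_q<c_2$. For small $\varepsilon>0$ I replace $x^*_p$ by $x^*_p-\varepsilon$ and $x^*_q$ by $x^*_q+\varepsilon\beta_p/\beta_q$; this keeps $\sum_i\beta_i x_i$ unchanged and stays in $P$ for $\varepsilon$ small, while the objective changes by $\varepsilon\beta_p(\alpha_q/\beta_q-\alpha_p/\beta_p)<0$, contradicting minimality. Hence at $x^*$ no such pair exists: for every pair with $\alpha_p/\beta_p>\alpha_q/\beta_q$ one has $x^*_p=c_1$ or $x^*_q=c_2$. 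Reading this in the descending order of the ratios $\alpha_{(k)}/\beta_{(k)}$ forces the threshold shape in which the leading coordinates equal $c_1$, followed by at most one strictly interior coordinate, followed by $c_2$'s. Coordinates sharing a common ratio value leave $f$ invariant under the same exchange, so ties (broken by index, as in the statement) may be resolved to this canonical shape without changing the optimal value.

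It then remains to locate the threshold and the interior value. Since $k\mapsto\sum_{j=1}^k c_1\beta_{(j)}+\sum_{j=k+1}^n c_2\beta_{(j)}$ is strictly decreasing (each step trades a $c_2$ for a $c_1$, with $c_1<c_2$ and $\beta_{(\cdot)}>0$), the index $k_0$ is well defined, and its maximality gives $\sum_{j=1}^{k_0}c_1\beta_{(j)}+\sum_{j=k_0+1}^n c_2\beta_{(j)}\ge m$ while the analogous sum for $k_0+1$ is $<m$. Setting $x_{(k)}=c_1$ for $k\le k_0$ and $x_{(k)}=c_2$ for $k\ge k_0+2$ and solving the equality constraint for the single remaining coordinate yields exactly the stated $\mu_{k_0+1}$. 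The first inequality rearranges to $c_2\beta_{(k_0+1)}\ge\mu_{k_0+1}\beta_{(k_0+1)}$, hence $\mu_{k_0+1}\le c_2$, and the second to $c_1\beta_{(k_0+1)}<\mu_{k_0+1}\beta_{(k_0+1)}$, hence $\mu_{k_0+1}>c_1$; so the point is feasible, has the shape forced above, and is therefore a minimizer.

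The only genuine subtlety is the bookkeeping around ties and the degenerate cases $k_0=0$ (no coordinate at $c_1$) and $k_0=n$ (all coordinates at $c_1$, which by feasibility forces $m=\sum_i c_1\beta_i$); I would dispatch these by the same monotonicity and feasibility bounds. The main conceptual point — that a purely local exchange certifies a global minimum — is handled precisely because $f$ is linear and the exchange is available at any feasible point that is not already of the canonical threshold shape.
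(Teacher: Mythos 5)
Your proposal is correct, but it takes a genuinely different route from the paper's. You argue \emph{indirectly}: a minimizer exists by compactness; a pairwise exchange (decrease a coordinate with large ratio $\alpha_p/\beta_p$, increase one with small ratio, preserving $\sum_i\beta_i x_i$) shows every minimizer must have the threshold shape ($c_1$ on the largest ratios, at most one interior coordinate, $c_2$ on the smallest); and the strict monotonicity of $h(k):=\sum_{j\le k}c_1\beta_{(j)}+\sum_{j>k}c_2\beta_{(j)}$ then locates the threshold at $k_0+1$ and forces the interior value $\mu_{k_0+1}$. The paper instead certifies optimality \emph{directly}: for an arbitrary feasible $y$ it decomposes the deviation $y-x$ as a transfer of constraint mass $\varepsilon\ge 0$ across the cut at $k_0$ (or at $k_0+1$, depending on the sign of $y_{(k_0+1)}-x_{(k_0+1)}$), and the descending order of the ratios gives $f(y)-f(x)\ge\bigl(\alpha_{(k_0)}/\beta_{(k_0)}-\alpha_{(k_0+1)}/\beta_{(k_0+1)}\bigr)\varepsilon\ge 0$ in one stroke, with no need for existence of a minimizer, tie bookkeeping, or uniqueness considerations. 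Your route is the standard fractional-knapsack exchange argument: it is modular, makes the greedy structure transparent, and it also makes explicit the feasibility hypothesis $\sum_i c_1\beta_i\le m\le\sum_i c_2\beta_i$ that the paper leaves implicit; its price is the surrounding bookkeeping. One step of that bookkeeping should be stated rather than elided: from \qq{every minimizer has the threshold shape} and \qq{the stated point is feasible and of that shape} it does not yet follow that the stated point is a minimizer --- you additionally need that a \emph{feasible} point of threshold shape is unique up to tie rearrangements (which leave $f$ unchanged). This does follow from what you prove: the constraint values attainable with threshold at position $a$ form the interval $[h(a),h(a-1)]$, and by strict monotonicity of $h$ these intervals for different $a$ meet only at endpoints, where the corresponding vectors coincide; so the gap is minor and fillable, but as written the final inference of your argument skips it.
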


\begin{proof}
We show that for any $y\in \R^n$ that fulfills \eqref{constraints App}, $f(y)\geq f(x)$ holds if $y\neq x$.
Per construction of $x$, there exists some $\varepsilon>0$ such that either
\[
\sum_{j=1}^{k_0}y_{(j)}\cdot \beta_{(j)}=\sum_{j=1}^{k_0}x_{(j)}\cdot \beta_{(j)}+\varepsilon\quad\text{and}\quad
\sum_{j=k_0+1}^{n}y_{(j)}\cdot \beta_{(j)}=\sum_{j=k_0+1}^{n}x_{(j)}\cdot \beta_{(j)}-\varepsilon
\]
or
\[
\quad\sum_{j=1}^{k_0+1}y_{(j)}\cdot \beta_{(j)}=\sum_{j=1}^{k_0+1}x_{(j)}\cdot \beta_{(j)}+\varepsilon \quad\text{and}\quad \sum_{j=k_0+2}^{n}y_{(j)}\cdot \beta_{(j)}=\sum_{j=k_0+2}^{n}x_{(j)}\cdot \beta_{(j)}-\varepsilon
\]
holds. Without loss of generality we regard the first case. Let $\varepsilon=\sum_{i=1}^{k_0}\varepsilon_i$ such that 
\[
y_{(i)}=x_{(i)}+\frac{\varepsilon_i}{\beta_{(i)}}\quad \text{for all }i=1,...,k_0
\]
and $\varepsilon=\sum_{i=k_0+1}^{n}\varepsilon_i$ such that 
\[
y_{(i)}=x_{(i)}-\frac{\varepsilon_i}{\beta_{(i)}}\quad\text{for all }i=k_0+1,\ldots,n.
\]
Then,

\[
    \sum_{j=1}^{k_0}y_{(j)}\alpha_{(j)}=\sum_{j=1}^{k_0}\left(x_{(j)}+\frac{\varepsilon_j}{\beta_{(j)}}\right)\alpha_{(j)}
    \geq \sum_{j=1}^{k_0}x_{(j)}\alpha_{(j)}+ \frac{\alpha_{(k_0)}}{\beta_{(k_0)}} \varepsilon
\]
and 
\[
 \sum_{j=k_0+1}^{n}y_{(j)}\alpha_{(j)}=\sum_{j=k_0+1}^{n}\left(x_{(j)}-\frac{\varepsilon_j}{\beta_{(j)}}\right)\alpha_{(j)}
    \geq \sum_{j=k_0+1}^{n}x_{(j)}\alpha_{(j)}- \frac{\alpha_{(k_0+1)}}{\beta_{(k_0+1)}} \varepsilon.
\]
Hence,
\[
\sum_{j=1}^{n}y_{(j)}\alpha_{(j)}\geq \sum_{j=1}^{n}x_{(j)}\alpha_{(j)}+\left(\frac{\alpha_{(k_0)}}{\beta_{(k_0)}} -\frac{\alpha_{(k_0+1)}}{\beta_{(k_0+1)}}\right) \varepsilon\geq \sum_{j=1}^{n}x_{(j)}\alpha_{(j)}
\]
For the second case, one proceeds exactly in the same way.
\end{proof}

We now translate the previous theorem into the setting of Section \ref{Statistical Test on Portfolio Level}. There, we aim to minimize the sum
\[
\sum_{t=1}^{N}\sum_{j\in \Lambda_{t}}^{}\alpha_{t,j}p_{t,j}
\]
under the side condition
\begin{equation*}
    \mu=\frac{1}{N}\cdot \sum_{t=1}^{N}\frac{1}{n_t}\sum_{j\in \Lambda_{t}}^{}p_{t,j}\text{ and }p_{t,j}\in[\PD_{\min} ,\PD_{\max} ].
\end{equation*}
We rewrite 
\[
\mu=\sum_{t=1}^{N}\sum_{j\in \Lambda_{t}}^{} \beta_{t,j} p_{t,j}
\]
with $\beta_{t,j}:=\frac{1}{N}\cdot \frac{1}{n_t}$.
We first sort the set of all possible tuples $(t,j)$ in ascending order, first with respect to $j$ and then with respect to $t$, i.e., we can identify any possible tuple $(t,j)$ with a natural number $i=1,\ldots,n:=\sum_{t=1}^{N}n_t$, which simplifies the cost function to
\begin{equation} 
    f\colon \R^n \to \R,\quad p=\left(p_1,...,p_{n} \right)\mapsto f(p):=\sum_{i=1}^{n}\alpha_i p_i
\end{equation}
with side conditions
\begin{equation} 
    \sum_{i=1}^n\beta_i p_i=\mu\quad \text{and}\quad \PD_{\min}\leq p_i \leq \PD_{\max}\quad \text{for  }i= 1,\ldots,n.
\end{equation}
Now, we are exactly in the setting of \eqref{Opt. Problem} and \eqref{constraints App}, and can apply Theorem \ref{Theorem}.

\section{Test on Portfolio Level without Solving Minimization Problem}\label{Appendix E}

We start by estimating the covariance between two default rates on reference dates $\RD_{t}$ and $\RD_{s}$.\ Using Equation \eqref{eq:3}, we find that
\begin{align}  \label{eq:Cov(X,Y)2} 
\Cov\left(X_{t},X_{s} \right)
&=\frac{1}{n_{t}\cdot n_{s}}\sum_{j\in \Lambda_{t} \cap \Lambda_{t_2}}^{}\Cov\left(x_{t,j},x_{s,j} \right) \notag \\
&\geq \frac{1}{n_{t}\cdot n_{s}}\sum_{j\in \Lambda_{t} \cap \Lambda_{s}}^{}p_{s,j} w_{t,s}\left(1-\PD_{\max} \right) \notag \\
&=\frac{k_{t,s}}{n_{t}\cdot n_{s}}w_{t,s}\left(1-\PD_{\max} \right) \frac{1}{k_{t,s}} \sum_{j\in \Lambda_{t} \cap \Lambda_{s}}^{}p_{s,j} \notag \\
&\geq \frac{k_{t,s}}{n_{t}\cdot n_{s}}w_{t,s}\left(1-\PD_{\max} \right) \gamma \cdot \E (X_{s}).    
\end{align}
The parameter $\gamma >0$ is less or equal to the ratio of the average default risk of persisting customers in the portfolio and the average default risk of the entire portfolio, i.e.,
$$\E\bigg(\frac{1}{k_{t,s}} \sum _{j\in \Lambda_{t}\cap \Lambda_{s}}^{}x_{s,j}\bigg)\geq\gamma\cdot \E\bigg(\frac{1}{n_{s}}\sum _{j\in \Lambda_{s}}^{}x_{s,j}\bigg).$$ 
In view of the fact that persisting customers should be the majority in the portfolio, it is reasonable to assume that $\gamma\approx1$.\ The value is to be estimated conservatively depending on the portfolio under consideration, i.e., in case of doubt, it should be chosen relatively small. 
If a portfolio consists only of persisting customers, $\gamma$ is always $1$, while the parameter for portfolios with a small proportion of persisting customers can be above or below $1$.\ In the following, the parameter $\gamma$ is to be understood as an estimate that is independent of the reference date and conservative for the entire history. From Equation \eqref{Optimization}, we see that
\begin{align}\label{eq: estimate with C}
   \sigma^2\geq \frac{1}{N^2}\sum_{t=1}^N \frac{1}{n_t^2}\sum_{j\in \Lambda_t} p_{t,j}(1-p_{t,j})\geq C\cdot\mathbb{E}\bigg(\frac{1}{N}\sum_{t=1}^{N}X_t \bigg) 
\end{align}
with $C=\frac{1-\PD_{\max}}{n_{\max}}$.\ Hence, using Equation \eqref{eq:Cov(X,Y)},
\begin{align*}
 N^2\cdot\sigma^2&=\sum _{t=1}^{N}{\sigma _i^2}+2\sum_{i=1}^{q-1}\sum _{t=1}^{N-i}\Cov\left (X_t,X_{t+i}\right) \\
 &\geq \sum_{t=1}^{N}\sigma_t^2+2\sum_{i=1}^{q-1}\sum_{t=1}^{N-i}\frac{k_{t,t+i}}{n_t\cdot n_{t+i}}\cdot\frac{q-i}{q}\left(1-\PD_{\max} \right)\cdot\gamma\cdot\mu_{t+i} \\
 &\geq \sum_{t=1}^{N}\sigma_t^2+2\sum_{i=1}^{q-1}  \min_{t=1,\ldots,N-i}{\left(\frac{k_{t,t+i}}{n_t\cdot n_{t+i}}\right)}\frac{q-i}{q}\left(1-\PD_{\max} \right)\cdot\gamma\sum_{t=1}^{N-i}\mu_{t+i} \\
 &\geq \sum_{t=1}^{N}\sigma_t^2+K_1\sum_{t=1}^{N}\mu_t-K_2
\end{align*}
with
\begin{align*}
    K_1&:=2\sum_{i=1}^{q-1}  \min_{t=1,\ldots,N-i}{\left(\frac{k_{t,t+i}}{n_t\cdot n_{t+i}}\right)}\frac{q-i}{q}\left(1-\PD_{\max} \right)\cdot\gamma\quad \text{and}\\
    K_2&:=2\sum_{i=1}^{q-1}  \min_{t=1,\ldots,N-i}{\left(\frac{k_{t,t+i}}{n_t\cdot n_{t+i}}\right)}\frac{q-i}{q}\left(1-\PD_{\max} \right)\cdot\gamma \cdot i \cdot \mu_{\old},
\end{align*}
where $\mu_{\old}\geq\max_{i=1,\ldots,q-1}{\mu_i}$.\ The term $\mu_{\old}$ cannot be determined by the null hypothesis, thus a conservative value is to be determined.\ For portfolios consisting of customers with good credit ratings, the choice $\mu_{\old}=\PD_{\max}$ is clearly not adequate. Using Equation \eqref{eq: estimate with C}, it follows that
$$N^2\cdot\sigma^2\geq c\sum_{i=1}^{N}\mu_i+K_1\sum_{i=1}^{N}\mu_i-K_2=N\left(C +K_1\right)\mu-K_2$$
with $c\geq\frac{1}{n_{\max}}\left(1-\PD_{\max}\right)$ and
\begin{align} \label{est: sigma}
\sigma^2 \geq \sigma_{{\rm alt}}^2(\mu):= \frac{1}{N}\left(c+K_1 \right)\mu-\frac{K_2}{N^2}.
\end{align}
Moreover, for portfolios with a short history and low PDs, it is not always possible to choose $\mu_{\old}=\PD_{\max}$ since $\frac{1}{N}\left(c+K_1 \right)\mu-\frac{K_2}{N^2}$ could become smaller than $0$. In order to be able to rule out this case, which also appears to be of little interest for applications of the test, we set an upper bound for $\mu_{\old}$, i.e.,
\[
\mu_{\old}\leq \frac{N\cdot \left( c+K_1\right)\mu}{(q-1)\cdot K_1}.
\]
Considering the random variable $Z^*\sim \mathcal{N} \left(\LTCT,\sigma_{{\rm alt}}^2(\LTCT)\right) $, we define the limits of the acceptance range as
$$k:=\LTCT+\Phi^{-1} \left(\frac{1}{2}\alpha \right) \sigma_{{\rm alt}}\big(\LTCT \big)$$
and
$$K:=\LTCT+\Phi^{-1} \left(1-\frac{1}{2}\alpha \right) \sigma_{{\rm alt}}\big(\LTCT \big).$$
The calibration test passes if
$$\LTDR\in\left[k,K\right].$$



\end{document}